\documentclass[preprint,12pt,numbers]{elsarticle}

\usepackage{graphicx}
\usepackage{amsmath}
\usepackage{amssymb}
\usepackage{amsthm}
\usepackage{booktabs}

\newtheorem{lemma}{Lemma}

\usepackage[ruled,vlined]{algorithm2e}
\usepackage{tikz}
\usetikzlibrary{positioning, fit, calc}
\usepackage{pgfplots}
\usepackage{hyperref}
\usepackage{xspace}

\pgfplotsset{compat=1.16}

\def\codename{\textsc{EnclaveScale}\xspace}

\journal{Journal of Information Security and Applications}

\begin{document}

\begin{frontmatter}

\title{\codename: Hardware-Assisted Edge-DP for Secure Data Centre Power Telemetry}

\author[1]{Hung Dang\corref{cor1}}
\ead{hung.dk@vlu.edu.vn}
\cortext[cor1]{Corresponding author}

\author[2]{Tue Nguyen}
\ead{tuent@appota.com}

\author[3]{Minh Vo}
\ead{MinhVX2@fpt.com}

\address[1]{Faculty of Information Technology, Van Lang School of Technology, Van Lang University, Ho Chi Minh City, Vietnam}
\address[2]{Appota Group, Vietnam}
\address[3]{FPT Corporation}

\begin{abstract}
\codename is a distributed, hardware-assisted telemetry architecture providing post-extraction attestation, enabling operators to collaboratively model high-resolution generative AI power transients. Existing cryptographic techniques scale poorly for 10-Hz streaming or fail to authenticate origins, permitting malicious hosts to spoof sensor inputs. We implement and evaluate a post-extraction pipeline utilizing DCAP attestation, differential privacy noise injection, and Byzantine rejection across 32 GCP Confidential VMs, achieving 0\% post-extraction attack success rate. This edge-DP approach distils continuous GPU transients into discrete Markov-chain transition matrices, guaranteeing event-level differential privacy. To mitigate pre-ingestion vulnerabilities, we propose an SPDM-authenticated first-mile layer. While current platforms lack attested I/O, emerging hardware architectures integrate PCIe IDE and TDISP to natively prevent host-level synthesis, securing the end-to-end provenance boundary. A Global Aggregation Enclave verifies these cryptographic proofs prior to capacity-weighted aggregation. Evaluation demonstrates a steady-state throughput of $131{,}406$ samples/s per enclave, amortising attestation overhead to $0.23\,\mu$s/sample. On empirical NVML-sampled H100, A100, and L4 traces, \codename achieves a dynamic orchestration margin error of $1.3$\,MW compared to $0.1$\,MW for an honest-aggregator central-DP baseline. \codename establishes a secure foundation for dynamic multi-tenant power orchestration, obfuscating sub-second anomalies locally and protecting macro-workload confidentiality via spatial dilution during global aggregation.

\end{abstract}

\begin{keyword}
Confidential Computing \sep Differential Privacy \sep Federated Analytics \sep Power Profiling 
\end{keyword}

\end{frontmatter}

\section{Introduction}
\label{sec:introduction}

Large-scale training and inference clusters utilizing high-density accelerators (e.g., NVIDIA H100 and A100 GPUs) generate massive, synchronized power transients at the $10$-Hz scale. These transients, resulting from coordinated micro-batch executions and model-parallel communication barriers, cause instantaneous power spikes of hundreds of megawatts~\cite{VercellinoAI2024}. 

Grid provisioning requires aggregating raw telemetry across multiple independent infrastructure providers. However, high-resolution telemetry directly encodes proprietary workload schedules and microarchitectural efficiency optimizations, preventing centralized aggregation. Protecting this telemetry requires a dual-layer defense: securing individual execution anomalies at the edge and diluting macro-workload identities during global aggregation. Existing privacy-preserving paradigms fail to support high-frequency streaming telemetry securely. Cryptographic Multi-Party Computation (MPC)~\cite{KellerMPSPDZ} over per-sample shares is WAN-infeasible at 10 Hz ($O(n)$ bandwidth per sample). While MPC over condensed matrices is bandwidth-feasible, it sacrifices pre-sharing execution integrity against a malicious host that can fabricate the matrix before the MPC boundary, as we demonstrate in §\ref{sec:eval_byzantine}. Federated Learning Secure Aggregation (SecAgg)~\cite{BonawitzSecAgg} successfully reduces transit exposure but relies on honest-but-curious assumptions, offering no cryptographic defense against a malicious host spoofing the telemetry prior to encryption. Both MPC and SecAgg blindly ingest host-provided data, suffering a fundamental ``first-mile'' data provenance gap. Alternatively, centralizing raw telemetry in a Trusted Execution Environment (TEE) aggregator forces single-domain trust concentration and scales poorly over cross-region inter-DC links.

We propose \codename, a distributed, hardware-assisted framework that resolves the bandwidth bottleneck and post-extraction execution-integrity gap via attested edge sanitisation on Intel Trust Domain Extensions (TDX). In this architecture, the State-Transition Extractor (STE) computes discrete Markov-chain state-transition matrices over the stream locally, isolated from the host. The enclave applies bounded differential privacy (DP) noise to the matrix to obfuscate precise \texttt{AllReduce} barrier timings, and binds the cryptographic output to a DCAP attestation quote before WAN transmission.

\codename reduces WAN egress (§\ref{sec:eval_microbenchmarks}), mathematically verifies local sanitisation (§\ref{sec:eval_byzantine}), and bounds utility degradation to DP noise (§\ref{sec:eval_utility}). Injecting noise locally incurs a $1.3$\,MW provisioning error (a $1.2$\,MW increase over the central-DP baseline, at per-batch $\varepsilon=1$ ($\varepsilon_{\text{epoch}} \approx 11.3$ over $10$\,min, $\delta=10^{-6}$), across 32 providers) for formal execution integrity against a malicious host. Protection of the macro-workload schedule relies entirely on structural spatial dilution at the global aggregator.

\textbf{Contributions:}
\begin{itemize}
    \item We design \codename's post-extraction pipeline (DCAP attestation, DP noise injection, and Byzantine rejection) to preclude software spoofing, modeling standard TEE adversary capabilities to formally reduce execution integrity to the unforgeability of the TDX quoting key.
    \item We detail an attested TEE-local differential privacy protocol, injecting Gaussian noise calibrated to a closed-form $\ell_2$-sensitivity bound ($\Delta_2(f) = \sqrt{6}$) and deriving a tight Rényi-DP composition bound for continuous streaming release.
    \item We execute a multi-region deployment across 32 Google Cloud C3 Confidential VMs using real NVML traces (H100, A100, L4), characterising throughput, attestation overhead, scalability, and Byzantine robustness. We quantify empirical utility by reporting dynamic orchestration megawatt error against a plaintext ground truth via strict parameter sweeps, comparison with four contemporaneous baselines, and a $72$-hour stability run.
    \item We explicitly scope the pre-ingestion first-mile layer as an architectural design sketch for forthcoming Granite Rapids (Xeon 6) platforms, demonstrating how PCIe TDISP theoretically enforces end-to-end telemetry provenance against host-level manipulation.
\end{itemize}

\section{Background and Problem Formulation}
\label{sec:background}

\subsection{Background: Generative AI Power Transients}
\label{sec:bg_power}

Traditional $1$- to $5$-minute power profiling masks the profound volatility of large-scale generative AI workloads. During distributed training, clusters of high-density accelerators (e.g., NVIDIA H100) execute synchronized micro-batches, transitioning from idle ($\approx 100$\,W) to thermal design power limits ($\approx 700$\,W) within $50$ to $100$ milliseconds due to model-parallel communication barriers (e.g., \texttt{AllReduce})~\cite{VercellinoAI2024, PattersonCarbon}. At facility scale, thousands of GPUs hitting simultaneous barriers induce massive, sub-second power transients. Standard data center undervoltage protection relays operate within 1 to 3 cycles (16--50\,ms). While absolute protection against these physical trips strictly requires provisioning for the theoretical peak ($P_{\max}$), modeling the stochastic distribution of these spikes using $10$-Hz-resolution telemetry enables aggressive, dynamic power orchestration and active load shifting underneath this hard physical ceiling.

\subsection{Background: Hardware Primitives (TDX and SPDM)}
\label{sec:bg_tdx}

Intel TDX provides VM-level hardware-enforced isolation (Trust Domains) with MKTME-based memory encryption and DCAP remote attestation~\cite{IntelTDX}; GCP Confidential VMs expose these primitives to tenant VMs (§6.1). The Security Protocol and Data Model (SPDM), published by the DMTF, authenticates hardware component identity via X.509 certificate chains; SPDM 1.2+ additionally establishes secure session keys directly between hardware endpoints via \texttt{KEY\_EXCHANGE} over ECDHE (or DHE/PSK). SPDM uses an AEAD-protected session (e.g., AES-GCM or ChaCha20-Poly1305) for confidentiality and integrity, rather than relying on TLS records. This enables secure telemetry transit across an untrusted host OS and establishes hardware-rooted data provenance.

\subsection{Background: Event-Level DP for Transition Matrices}
\label{sec:bg_dp}

We achieve Differential Privacy (DP) via the Gaussian Mechanism, injecting $\mathcal{N}(0, \sigma^2)$ noise calibrated to the global $\ell_2$-sensitivity $\Delta_2(f)$. In \codename, the query $f$ maps a finite stream of discrete power states to a state-transition matrix $M$. The global $\ell_2$-sensitivity is strictly bounded at $\Delta_2(f) = \sqrt{(-2)^2 + 1^2 + 1^2} = \sqrt{6}$. Protecting a spike spanning $k$ samples requires increasing the noise scale by a factor $k$ (i.e., $\sigma \to k\sigma$) to achieve $(\varepsilon, \delta)$-DP for the $k$-sample group, or equivalently, the same noise scale $\sigma$ provides only $(k\varepsilon, k\delta)$-DP group privacy. At $k = 5$ and per-batch $\varepsilon = 1$, group privacy yields only a $(5, 5 \times 10^{-6})$-DP guarantee for $500$\,ms spikes, which provides negligible formal protection. Extensive formal DP definitions and degradation proofs are provided in the Supplementary Material.

\subsection{Threat Model}
\label{sec:threat_model}

The \codename architecture comprises four entities: \textbf{Data Providers} (infrastructure operators), \textbf{Local Sanitisation Enclaves} (LSEs, TDX Confidential VMs), a \textbf{Global Aggregation Enclave} (GAE, neutral consortium), and \textbf{Verifiers} (grid operators). We adopt a strict threat model for uncooperative multi-tenant infrastructure:

\textit{Malicious Host and Operator:} We assume the Data Provider's host OS, hypervisor, and operator are actively malicious, controlling the network stack and memory allocation to inject code, bypass DP noise injection, or pollute the global model.

\textit{The "First-Mile" Authentication Gap:} We distinguish \emph{execution integrity} from \emph{semantic data integrity}. Lacking hardware-rooted SPDM session capabilities, legacy accelerators cannot cryptographically attest to the LSE, allowing a malicious host to perform a Man-In-The-Middle attack by feeding semantically false power traces. We assume verified server capacity via out-of-band attestation and consider the sensor-LSE physical link out of adversarial reach.

\textit{Trusted Hardware and Enclaves:} We trust the Intel TDX hardware, CPU microcode, the Intel PCS root of trust, and standard cryptographic primitives. LSEs explicitly verify the GAE's TDX attestation quote prior to session establishment (§\ref{sec:design_attestation}). Physical side channels and denial-of-service remain out of scope.

\subsection{Security Goals}
Given the threat model, \codename is designed to satisfy three formal security properties:
\begin{enumerate}
    \item \textbf{Confidentiality:} The raw $0.1$-second power traces $S_i$ never leave the LSE in an identifiable or reconstructable form. The untrusted host OS cannot inspect $S_i$ during processing.
    \item \textbf{Execution Integrity:} The GAE accepts a summary $\hat{M}_i$ if and only if it is accompanied by a valid TDX attestation quote $q_i$ unequivocally binding the payload to the expected, unmodified LSE binary hash.
    \item \textbf{Output Privacy (Dual-Layered):} The released summaries satisfy event-level $(\varepsilon, \delta)$-Differential Privacy to cryptographically obfuscate individual $100$-ms power samples (protecting micro-transient anomalies). Macro-workload confidentiality is not cryptographically guaranteed; rather, it is achieved structurally via capacity-weighted global aggregation at the GAE, which dilutes individual provider matrices below the threshold of classifier recoverability strictly under the assumption of heterogeneous co-tenancy.
\end{enumerate}

\section{System Design}
\label{sec:system_design}

\codename operates via a federated architecture where edge-deployed Trusted Execution Environments distil high-frequency telemetry into privacy-preserving, cryptographically attested summaries.

\subsection{System Overview}
\label{sec:design_overview}
Our system delegates feature extraction to the provider's edge. Provider $i$ deploys a Local Sanitisation Enclave (LSE) on an Intel TDX Confidential VM. The LSE ingests the raw 0.1-second power trace stream $S_i$ from the host's out-of-band management interface. The TDX Multi-Key Total Memory Encryption (MKTME) ensures the host OS cannot inspect $S_i$ (MKTME is TDX's multi-key AES-XTS memory cipher; it is architecturally distinct from the SGX MEE). A production deployment requires bare-metal TDX hosts with SPDM-enabled PCIe/I2C passthrough. 

Over a temporal batching window (e.g., 10 seconds), the LSE computes a plaintext discrete Markov-chain state-transition matrix $M_i$. To enforce output privacy, the LSE injects Gaussian noise $\mathcal{N}(0, \sigma^2\mathbf{I})$ calibrated to the tight $\ell_2$-sensitivity bound $\Delta_2(f) = \sqrt{6}$ (§\ref{sec:security_analysis}), producing the differentially private matrix $\hat{M}_i$. The LSE fetches a TDX quote (DCAP token) $q_i$, cryptographic binding it to the hash of $\hat{M}_i$. The provider transmits only the tuple $(\hat{M}_i, q_i)$ to the Global Aggregation Enclave (GAE) (see Fig.~\ref{fig:architecture}), which verifies execution integrity via $q_i$ and computes a capacity-weighted, hardware-specific global transition matrix $M_{global}^{(h)}$.

\begin{figure*}[t]
    \centering
\begin{tikzpicture}[
    >=stealth,
    font=\footnotesize,
    provider/.style={
        draw=red!70!black, rectangle, dashed, thick, rounded corners=5pt,
        minimum width=2.4cm, minimum height=1.1cm,
        align=center, fill=gray!10
    },
    lse/.style={
        draw=blue!70!black, rectangle, rounded corners=3pt,
        minimum width=2.2cm, minimum height=0.9cm,
        align=center, fill=blue!8, thick
    },
    gae/.style={
        draw=blue!70!black, rectangle, rounded corners=3pt,
        minimum width=2.4cm, minimum height=1.1cm,
        align=center, fill=blue!8, thick
    },
    verifier/.style={
        draw, rectangle, rounded corners=3pt,
        minimum width=2.2cm, minimum height=0.9cm,
        align=center, fill=green!8, thick
    },
    wanlabel/.style={font=\tiny\itshape, text=black!60}
]

\begin{scope}[shift={(-4.0, 2.5)}]
  \node[provider] (p1) {Provider 1\\{\tiny Untrusted Host OS}};
  \node[lse, below=0.7cm of p1] (lse1) {LSE$_1$\\{\tiny TDX TD}};
  \draw[->, thick] (p1) -- node[right, font=\tiny] {$S_1$} (lse1);
\end{scope}

\begin{scope}[shift={(-0.5, 2.5)}]
  \node[provider] (p2) {Provider 2\\{\tiny Untrusted Host OS}};
  \node[lse, below=0.7cm of p2] (lse2) {LSE$_2$\\{\tiny TDX TD}};
  \draw[->, thick] (p2) -- node[right, font=\tiny] {$S_2$} (lse2);
\end{scope}

\node at (1.5, 2.5) {$\dots$};
\node at (1.5, 0.9) {$\dots$};

\begin{scope}[shift={(3.5, 2.5)}]
  \node[provider] (pn) {Provider $N$\\{\tiny Untrusted Host OS}};
  \node[lse, below=0.7cm of pn] (lsen) {LSE$_N$\\{\tiny TDX TD}};
  \draw[->, thick] (pn) -- node[right, font=\tiny] {$S_N$} (lsen);
\end{scope}

\node[gae] (gae) at (-0.5, -1.5) {GAE\\{\tiny TDX TD}};

\draw[->, thick]
    (lse1.south) to[out=-90, in=150]
    node[right=4pt, yshift=4pt, wanlabel, pos=0.6] {$(\hat{M}_1,\,q_1)$}
    (gae.north west);

\draw[->, thick]
    (lse2.south) --
    node[right=2pt, wanlabel, pos=0.4] {$(\hat{M}_2,\,q_2)$}
    (gae.north);

\draw[->, thick]
    (lsen.south) to[out=-90, in=30]
    node[right=4pt, yshift=-4pt, wanlabel, pos=0.6] {$(\hat{M}_N,\,q_N)$}
    (gae.north east);

\node[verifier] (verifier) at (4.0, -1.5)
    {Verifier\\{\tiny Grid Operator}};

\draw[->, thick]
    (gae.east) --
    node[above, font=\tiny] {$M_{\mathrm{global}}^{(h)}$}
    (verifier.west);

\node[draw, rectangle, rounded corners=3pt,
      minimum width=2.0cm, minimum height=0.7cm,
      align=center, fill=orange!10, thick,
      font=\tiny] (pcs) at (-4.0, -1.5)
     {Intel PCS\\{\tiny DCAP Root}};

\draw[->, dashed, gray]
    (pcs.east) -- node[above, font=\tiny\itshape, text=gray] {PCK cert} (gae.west);

\draw[->, dashed, gray]
    ([xshift=-0.5cm]pcs.north) --
    node[left, font=\tiny\itshape, text=gray, pos=0.55] {PCK cert}
    ([xshift=-0.5cm]lse1.south);

\end{tikzpicture}
    \caption{The \codename architecture. Raw telemetry transients ($S_i$) are distilled into DP-noised matrices ($\hat{M}_i$) within edge-deployed TDX enclaves (LSEs) and transmitted to the central GAE alongside an attestation quote ($q_i$). Untrusted host components are denoted by red dashed boundaries.}
    \label{fig:architecture}
\end{figure*}

\subsection{SPDM-Authenticated Telemetry Ingestion}
\label{sec:design_spdm}

To neutralize first-mile spoofing, \codename anchors data ingestion to the Security Protocol and Data Model (SPDM). While our empirical implementation (Section~\ref{sec:implementation}) employs a software shim pending hardware-native PCIe TDISP and IDE deployment, the target architecture requires hardware SPDM responders. Upon boot, the LSE generates an ephemeral Ed25519 key pair $(sk_E, pk_E)$ in MKTME-protected memory and initiates an SPDM 1.2+ handshake directly with the accelerator's Root of Trust~\cite{DMTF_SPDM}. The LSE validates the accelerator's X.509 certificate and executes \texttt{KEY\_EXCHANGE} to derive symmetric session keys. 

The host OS routes packets but cannot decrypt the stream or forge signatures without the hardware's private key. The LSE maintains an SPDM session transcript hash ($H_{\text{SPDM}}$) capturing the authenticated history; MAC verification failures trigger immediate connection termination.

\subsection{Local Extraction and Edge DP}
\label{sec:design_lse}

Once secured, the LSE maintains a ring buffer for the incoming $10$\,Hz telemetry stream. Mapping this directly to states via static thresholds risks boundary-straddling attacks, where a malicious host oscillates the clock to fabricate transitions, inflating the spectral gap $\gamma$ and causing under-provisioning. 

To neutralize this, the State-Transition Extractor (STE) applies Non-Overlapping Temporal Block Pooling. It segments the telemetry into discrete $k$-sample windows (e.g., $k=10$ for $1$\,s blocks). For each window $W_j$, the STE computes $P_{pool}^{(j)} = \max_{t \in W_j} P_t$ and maps this maximum to a single state $S_j \in \mathcal{S}$ (e.g., \texttt{IDLE}, \texttt{TRAINING}) using predefined hardware threshold bands. Emitting one state per block physically absorbs high-frequency oscillations. The transition between $S_{j-1}$ and $S_j$ increments the plaintext transition matrix $M_i$. To amortise noise, the LSE aggregates telemetry across all $G$ local accelerators (e.g., $G=8$) into a single matrix. At $W=10$\,s, maximum per-batch transitions expand from $9$ to $72$, enabling dense cells to exceed $50$ counts. State count $|\mathcal{S}|$ trades utility against noise scale (§\ref{sec:eval_sensitivity_robustness}).

At the batch window's end, the LSE injects DP noise. Because windows $W_j$ are non-overlapping, perturbing one raw measurement $P_t$ alters at most one pooled state $S_j$, changing exactly two matrix transitions (entering and leaving $S_j$). The DP guarantee operates at the provider-batch level; since the window structure prevents altering more than one state across the $G$-GPU pool simultaneously, the $\ell_2$-sensitivity strictly remains $\Delta_2(f) = \sqrt{6}$ (§\ref{sec:security_analysis}). For privacy budget $\varepsilon$ and $\delta = 10^{-6}$, the LSE adds $\mathcal{N}(0, \sigma^2 \mathbf{I})$ noise calibrated via the Analytic Gaussian Mechanism~\cite{BalleWang2018}. It then applies thresholding (discarding cells where $x+\eta < \Phi^{-1}(0.95)\sigma$) and row-stochastic normalisation. To preserve spectral properties, these steps execute in \texttt{float64} inside the enclave; the final matrix is downcast to \texttt{float32} for transmission. Both steps are deterministic post-processing.

\emph{Degenerate-row fallback.} If thresholding suppresses an entire row, the LSE substitutes a uniform distribution ($1/|\mathcal{S}|$ per cell), preserving DP via post-processing theorems~\cite{DworkDP}. Empirically, this triggers in $\leq 0.3\%$ of evaluated rows. Over the H100 trace at $\varepsilon=1$, structurally empty cells survive at $5.0\%$ (matching the $1-\Phi(1.645)$ rejection rate), while cells with $>50$ counts survive at $99.8\%$. Setting $|\mathcal{S}|=5$ concentrates transition mass into this $>50$ bin, preserving $98.5\%$ of genuine transitions while correctly suppressing empty cells that erode the spectral gap.

\subsection{Remote Attestation and Cryptographic Binding}
\label{sec:design_attestation}

\codename enforces mutual hardware-rooted remote attestation. 

\textbf{GAE pre-authentication.} Before initialization, the LSE verifies the GAE's TDX attestation quote $q_{\text{GAE}}$ over TLS, checking the certificate chain and validating the \texttt{MRTD} field against the expected GAE binary measurement. Mismatches trigger connection aborts. 

\textbf{LSE initialization and binding.} The LSE generates an ephemeral Ed25519 keypair $(sk_E, pk_E)$, requests a DCAP quote $q_{init}$ binding $pk_E$, and registers with the GAE. The GAE validates the provider against a PKI registry $\mathcal{R}$ and adds $pk_E$ to the active-session map $\mathcal{A}$. Subsequent temporal batches avoid heavy DCAP quoting; the LSE signs the payload $H_{\text{payload}} = \text{SHA256}(\hat{M}_i \parallel H_{\text{SPDM}} \parallel h \parallel \text{timestamp} \parallel b)$ using $sk_E$ to produce $\tau$. The transmitted quote $q_i$ comprises $(q_{init}, \tau, H_{\text{payload}})$, amortising attestation overhead. The GAE validates $pk_E$ against $\mathcal{A}$ and strictly enforces counter monotonicity ($b > \max\_b\_accepted[pk_E]$). To prevent rollback, the GAE persists $\mathcal{A}$ to a guest-managed encrypted virtual disk (sealed to the GAE's MRTD) on every \texttt{ACCEPT}, adding $\sim 1.2$\,ms per-batch latency.

\subsection{Hardware-Specific Global Aggregation}
\label{sec:design_gae}

Because power transient signatures depend fundamentally on microarchitecture, aggregating distinct hardware traces degrades model utility. The GAE maintains hardware-specific sub-models $M_{global}^{(h)}$ via capacity-weighted aggregation:
\begin{equation}
    M_{global}^{(h)} = \sum_{i \in \mathcal{P}_h} \frac{C_i}{\sum_{j \in \mathcal{P}_h} C_j} \hat{M}_i
\end{equation}
where $C_i$ is the declared server capacity for hardware $h$. The PKI registry caps $C_i$ to suppress over-reporting Sybil attacks (§\ref{sec:eval_byzantine}).

\subsection{Power Orchestration Margins}
\label{sec:design_capacity}

We map the probabilistic matrix $M_{global}^{(h)}$ to an active power orchestration margin. Let $\vec{\pi}$ be the normalized principal eigenvector of $M_{global}^{(h)}$. Given a hardware-specific power profile vector $\vec{P}_h$ (with worst-case state $P_{\max}^{(h)}$), the expected baseline power draw for cluster size $C_{total}$ is $E[P] = C_{total} (\vec{\pi} \cdot \vec{P}_h)$. To account for synchronized state transitions, we model physical load deviations using the spectral gap $\gamma$ of $M_{global}^{(h)}$. We adopt the worst-case synchrony regime, treating the $C_{total}$ GPUs as a single synchronized Markov chain.

\begin{lemma}[Markov Chain Concentration]
\label{lem:concentration}
Let $(X_t)$ be an irreducible, reversible Markov chain with stationary distribution $\vec{\pi}$ and spectral gap $\gamma = 1 - |\lambda_2|$. For a finite integration window of $N$ steps, the time-averaged power draw $\bar{P} = \frac{1}{N}\sum_{t=1}^N P(X_t)$ obeys the concentration bound~\cite{Paulin2015}:
\begin{equation}
    \mathbb{P}\left( \bar{P} - \mathbb{E}[P] \ge \epsilon \right) \le \exp\left( -\frac{\gamma N \epsilon^2}{P_{\max}^2} \right)
\end{equation}
We explicitly instantiate this tight one-sided bound (parameterized via $K=1$) because microgrid safety is solely concerned with under-provisioning. The exhaustive algebraic reduction from Paulin's generalized Hoeffding-type inequality remains available in the Supplementary Material.
\end{lemma}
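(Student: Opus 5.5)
The plan is to specialise Paulin's Hoeffding-type inequality for Markov chains~\cite{Paulin2015} rather than build a new concentration argument. For a stationary, reversible chain with absolute spectral gap $\gamma$ and a function $f$ with $a \le f(x) \le b$, Paulin's Theorem~3.3 (Hoeffding version via the pseudo-spectral gap) gives
\begin{equation*}
\mathbb{P}\left(\tfrac{1}{N}\textstyle\sum_{t=1}^N f(X_t) - \mathbb{E}_\pi[f] \ge \epsilon\right) \le \exp\left(-\frac{2\gamma}{2-\gamma}\cdot\frac{N\epsilon^2}{(b-a)^2}\right).
\end{equation*}
The reduction then has three steps: choose $f$, bound its range, and simplify the constant.

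\textbf{Choosing $f$ and its range.} Take $f(x)=P(X_t=x)=(\vec{P}_h)_x$, the per-state power drawn from the hardware profile vector of \S\ref{sec:design_capacity}. All entries are nonnegative and at most the worst-case state $P_{\max}^{(h)}$, so $b-a \le P_{\max}$. Also $\mathbb{E}_\pi[f]=\vec{\pi}\cdot\vec{P}_h$, which is the per-GPU $\mathbb{E}[P]$ used above. Under the worst-case synchrony regime the whole cluster is a single chain, so scaling by $C_{\text{total}}$ only rescales $\epsilon$ and $P_{\max}$ together and leaves the exponent unchanged.

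\textbf{Simplifying the constant.} Since $0<\gamma\le 1$, we have $2-\gamma \le 2$, so $\frac{2\gamma}{2-\gamma}\ge \gamma$. Replacing the exponent by $-\gamma N\epsilon^2/P_{\max}^2$ therefore only weakens the bound, which gives the stated inequality. This is the "$K=1$" instantiation: a unit constant in place of Paulin's. Keeping only the upper tail uses the one-sided version and avoids the factor $2$ of the two-sided statement.

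\textbf{Main obstacle: the hypotheses.} The constants are routine; the hypotheses are not.
\begin{itemize}
\item \emph{Spectral gap definition.} Paulin's result uses the absolute spectral gap $1-\max(|\lambda_2|,|\lambda_{\min}|)$. The lemma instead defines $\gamma=1-|\lambda_2|$, so I would either assume $|\lambda_2|$ dominates every nontrivial eigenvalue or pass to the lazy chain. The lazy chain, however, changes $\gamma$ by a factor of $2$.
\item \emph{Stationarity.} Paulin's bound is for a chain started from $\pi$. For a chain started elsewhere, I would add his $\|d\nu/d\pi\|_{2}$ multiplicative correction.
\item \emph{Reversibility.} The noised, thresholded matrix $M^{(h)}_{\text{global}}$ need not be reversible. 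The statement must therefore be read as holding for the true underlying chain, or else the pseudo-spectral-gap variant (Theorem~3.4 of Paulin) must be used.
\end{itemize}
I would state these caveats explicitly and defer the exact constant bookkeeping to the supplementary derivation.
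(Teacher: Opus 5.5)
Your reduction is correct and follows the paper's route. You specialise a one-sided Hoeffding-type bound for stationary reversible chains to $f=\vec{P}_h$, bound the range by $P_{\max}$, and weaken the constant; the stationary start you flag is a hypothesis the lemma leaves implicit. Your step $\frac{2\gamma}{2-\gamma}\ge\gamma$ actually justifies the paper's unit-constant ``$K=1$'' form rather than merely asserting it.

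Two corrections. First, attribute the sharp factor $2\frac{1-\lambda}{1+\lambda}=\frac{2\gamma}{2-\gamma}$ to L\'eon--Perron (finite reversible chains) and Miasojedow, not to a pseudo-spectral-gap theorem of Paulin: Paulin's spectral-gap and pseudo-spectral-gap bounds carry larger constants that would not reduce to $K=1$. Second, in the finite-state L\'eon--Perron form only $\lambda_0=\max(\lambda^{+},0)$ enters, where $\lambda^{+}$ is the largest eigenvalue below $1$. Since $\gamma=1-|\lambda_2|\le 1-\lambda_0$ under either eigenvalue ordering, your spectral-gap caveat is unnecessary, and so is the lazy-chain detour, which would change the chain anyway.
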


To ensure the probability of exceeding the orchestration margin remains below a critical threshold $\eta$, we invert the bound. Bundling the integration window length $N$, and the failure probability $\eta$ into a dimensionless safety factor $c(\eta, N) := \sqrt{\log(1/\eta) / N}$, the per-cluster peak-load margin admits the closed form:
\begin{equation}
\label{eq:peakmargin_pmax}
    L_{peak}^{(h)} \;=\; E[P] \;+\; C_{total}\,P_{\max}^{(h)}\,c(\eta,N)\,\sqrt{\frac{1}{\gamma}}.
\end{equation}

Equation~\eqref{eq:peakmargin_pmax} is the rigorous closed form we use throughout this paper. This margin is subject to a strict regime of validity ($\gamma > \gamma_{\min} \approx 0.10$). As $\gamma \to 0$, the $\sqrt{1/\gamma}$ term grows unboundedly; for exceptionally slow-mixing workloads, the margin must instead be capped by the physical hardware ceiling $L_{peak}^{(h)} \le C_{total} P_{\max}^{(h)}$. Aggregate facility margin bounds the linear superposition of these margins: $L_{total} = \sum_{h \in \mathcal{H}} L_{peak}^{(h)}$. Section~\ref{sec:evaluation} evaluates dynamic orchestration error as the megawatt difference between the $L_{total}$ derived from the DP-noised global model versus the plaintext ground truth.
\section{Security and Privacy Analysis}
\label{sec:security_analysis}

We formally reduce the security goals of §\ref{sec:background} to Intel TDX hardware guarantees and differential privacy properties.

\subsection{Standard TEE Attack Classes and Mitigations}
\label{sec:attack_classes}

A TDX-resident streaming system mitigates several standard attack classes.

\textit{Replay, Rollback, and TOCTOU.} A malicious host may resubmit benign tuples, restore hypervisor snapshots, or attempt Time-of-Check to Time-of-Use (TOCTOU) race conditions. \codename mitigates this by requiring an attested time-sync protocol (e.g., Roughtime) to drive temporal batching. The monotonic batch counter $b$ provides the TOCTOU guarantee directly: any hypervisor modification between noise injection and signing alters $H_{\text{payload}}$, triggering GAE rejection regardless of clock state. The \texttt{TSC} provides intra-TD temporal ordering; freshness against rollback derives exclusively from the Roughtime-attested external timestamp bound into $H_{\text{payload}}$. The DP noise seed originates from the CPU's hardware True Random Number Generator (\texttt{RDSEED}) to guarantee fresh entropy upon restore.

\textit{Microarchitectural Side Channels.} We assume the deployed TDX module incorporates architectural mitigations for known side channels (e.g., \AE PIC Leak, Downfall, Hertzbleed) per Intel advisories~\cite{IntelSA00837, IntelHertzbleedMitigation}; physical side channels are out of scope.

\textit{First-Mile Input Spoofing.} A malicious host routing the raw stream $S_i$ can trivially synthesize fake power traces. \codename mandates SPDM 1.2+ to establish a mutually authenticated session directly between the accelerator's hardware SPDM responder and the LSE, terminating \emph{inside} the TDX Trust Domain. The host acts purely as a blind packet router.

\textit{DCAP TCB-recovery handling.} The GAE enforces Intel TCB recoveries (revoking PCK certificates) via CRL checks with a configurable grace period (§\ref{sec:eval_sensitivity_robustness}).

\subsection{Confidentiality and Integrity Bounds}

\textbf{End-to-End Data and Execution Integrity:} The raw stream $S_i$ enters the LSE's isolated memory via an SPDM 1.2+ secure session terminating inside the Trust Domain. Lacking the accelerator's hardware identity keys, the host OS can neither spoof $S_i$ via valid SPDM frames nor decrypt the legitimate telemetry. Once inside the enclave, Intel TDX provides cryptographic confidentiality via the Multi-Key Total Memory Encryption (MKTME) engine. Under Logical Integrity (LI) mode, integrity protection covers software-mediated tampering. Thus, data protection rigorously reduces to the unforgeability of the SPDM AEAD (AES-GCM-256) and the logical isolation guarantee of TDX MKTME. The LSE exclusively exports the transition matrix $\hat{M}_i$; because $\hat{M}_i$ is the output of an $(\varepsilon, \delta)$-DP mechanism, residual information leakage is analytically bounded by $(\varepsilon, \delta)$.

\textbf{Execution Integrity:} \codename prevents malicious LSE binary substitution or matrix fabrication via the DCAP verification sequence (see Supplementary Material). The hardware-generated quote $q_{init}$ reflects the binary measurement. Execution integrity reduces to the unforgeability of the TDX quoting enclave's signature (EUF-CMA) and the collision resistance of SHA-256.

\begin{lemma}[SPDM Transcript Binding]
\label{lem:spdm_binding}
Assuming the unforgeability of the SPDM AEAD and the collision resistance of SHA-256, cryptographic binding of the SPDM transcript hash $H_{\text{SPDM}}$ into the TDX-attested payload $H_{\text{payload}}$ precludes session-splicing under the host-controlled threat model.
\end{lemma}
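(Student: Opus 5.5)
The proof is a game-based reduction. We first fix what ``session-splicing'' means. The host controls all traffic between the accelerator's SPDM responder, the LSE, and the GAE. It wins if the GAE issues \texttt{ACCEPT} on a tuple $(\hat{M}_i, q_i)$ whose matrix was computed inside an attested LSE, but from telemetry frames that do not belong to the single authenticated SPDM session whose transcript hash $H_{\text{SPDM}}$ appears in $H_{\text{payload}}$. Such frames could come from a different session, from a different accelerator, or be replayed from an earlier session. We then argue by contradiction: an adversary $\mathcal{A}$ that splices with non-negligible probability is turned into a break of one of three primitives, namely the SPDM AEAD (INT-CTXT), SHA-256 collision resistance, or the EUF-CMA security of the Ed25519 key $sk_E$. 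The last applies because $pk_E$ is bound into $q_{init}$, so the execution-integrity argument of the preceding paragraph already lets us trust $pk_E$.

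The splicing event is split into three disjoint cases.

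\begin{enumerate}
\item \textbf{The signature $\tau$ is fresh.} The GAE accepted a $\tau$ on an $H_{\text{payload}}$ that the honest LSE never signed. This yields an Ed25519 forgery directly, since $sk_E$ never leaves MKTME-protected memory.
\item \textbf{The hash was signed but the content differs.} The LSE did sign $H_{\text{payload}}$, but the tuple the GAE checks has $(\hat{M}_i, H_{\text{SPDM}}, h, \text{timestamp}, b)$ different from what the LSE hashed. This gives a SHA-256 collision. Replay of an old but identical tuple is excluded separately by the monotonic counter $b$.
\item \textbf{The hash matches the LSE's computation.} Then the LSE itself produced $\hat{M}_i$ from the frames it accepted under the session whose transcript is $H_{\text{SPDM}}$. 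For splicing to have occurred, some accepted frame must have been authenticated under that session's keys without having been emitted by the responder in that session. This is an AEAD ciphertext-integrity forgery: the session keys come from \texttt{KEY\_EXCHANGE} bound to the transcript, and the host does not know them.
\end{enumerate}

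\noindent We sum the three advantages to obtain the bound.

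The main obstacle is the third case, and specifically the gap between \emph{``frames authenticated under the session keys''} and \emph{``frames belonging to the transcript $H_{\text{SPDM}}$.''} A splicing host could try to run two handshakes and make the LSE associate the keys of one session with the transcript hash of another. To close this, I would first invoke the SPDM 1.2 key-schedule property that the session keys are derived from a transcript hash (TH) that covers the responder's signed \texttt{KEY\_EXCHANGE\_RSP}. A transcript mismatch then forces either a different key or a forged responder signature. This adds a fourth reduction, to the responder's certificate-chain signature scheme, which the lemma's hypotheses only implicitly cover. I would state that dependency explicitly rather than absorb it into ``AEAD unforgeability.''

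A second subtlety is that the LSE must maintain a single $H_{\text{SPDM}}$ per batch and must terminate on any MAC failure, as described in §\ref{sec:design_spdm}. I would list this as a protocol invariant the reduction relies on. Finally, I would scope the conclusion to execution and provenance binding only. The semantic truth of the sensor data remains governed by the first-mile assumption of §\ref{sec:threat_model}.
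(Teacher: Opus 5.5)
Your proposal is correct and follows the same route as the paper's sketch. The paper argues that a splice must either invalidate the SPDM AEAD MAC checked inside the LSE (your case 3) or forge the $sk_E$ signature over the resulting $H_{\text{payload}}$ (your cases 1 and 2, where collision resistance ensures that altered contents yield a new hash). Your separate reduction to the responder's signed \texttt{KEY\_EXCHANGE\_RSP} just unpacks the paper's claim that the responder ``mutually authenticates'' $H_{\text{SPDM}}$, and you are right that this reduction, Ed25519 EUF-CMA, and the ECDHE assumption (needed for key secrecy against a host that passively forwards the handshake) all sit outside the lemma's stated hypotheses.
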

\emph{Proof sketch.} The LSE computes $H_{\text{payload}} = \text{SHA256}(\hat{M}_i \parallel H_{\text{SPDM}} \parallel \dots)$. The GAE verification of the TDX quote $q_i$ guarantees $H_{\text{payload}}$ was signed by the enclave-bound ephemeral key $sk_E$. Because the hardware SPDM responder mutually authenticates $H_{\text{SPDM}}$ across the PCIe bus, an adversary cannot substitute an alternative SPDM session or inject spoofed telemetry without either invalidating the active $H_{\text{SPDM}}$ MAC or forging the TDX-attested signature over the resultant $H_{\text{payload}}$. \qed

\subsection{Differential Privacy Guarantee and Composition}

\textbf{Single-Batch DP Bound.} The LSE maps a stream $S_i$ of length $W$ to a transition-count matrix $M_i$. Changing a single state $S_i[t]$ alters two transitions. The maximum $\ell_2$ difference occurs when mutating a self-loop (e.g., $A \to A \to A$), changing exactly three cells' counts by -2, +1, and +1. The $\ell_2$-sensitivity is therefore $\Delta_2(f) = \sqrt{(-2)^2 + 1^2 + 1^2} = \sqrt{6}$. Adding spherical Gaussian noise $\mathcal{N}(0, \sigma^2 \mathbf{I})$ with $\sigma$ calibrated numerically via the exact Analytic Gaussian Mechanism~\cite{BalleWang2018} strictly satisfies $(\varepsilon,\delta)$-DP per batch.

\textbf{Sequential Composition over Time.} In a continuous streaming environment, the LSE releases a sequence $\{\hat{M}_{i,1}, \ldots, \hat{M}_{i,T}\}$. We bound the cumulative privacy loss using the Rényi-DP framework~\cite{MironovRDP}: a single application of the Gaussian mechanism is $(\alpha, \rho)$-RDP with $\rho(\alpha) = \frac{\alpha\, (\Delta_2(f))^2}{2 \sigma^2}$. By the linearity of RDP under sequential composition, the $T$-fold release is $(\alpha, T\rho(\alpha))$-RDP. Converting to $(\varepsilon_{\text{total}}, \delta)$-DP and minimizing over the order $\alpha > 1$ (specifically, at the optimal order $\alpha^* = 1 + \sigma\sqrt{2 \ln(1/\delta) / T} / \Delta_2(f)$) yields the closed-form bound~\cite[Proposition~3]{MironovRDP}~\cite{WangSubsampled2019}. This formula applies strictly for $T > 1$ as a composition bound and diverges from the analytic guarantee at $T=1$; single-batch $\sigma$ must be calibrated directly via the AGM condition stated above:
\begin{equation}
\label{eq:composition_closed}
\varepsilon_{\text{total}}(T, \delta) \;=\; \frac{T (\Delta_2(f))^2}{2 \sigma^2} + \frac{\sqrt{2 T (\Delta_2(f))^2 \ln(1/\delta)}}{\sigma}.
\end{equation}
At a batch size of $W=10$\,s (per-batch $\varepsilon=1, \delta=10^{-6}$), continuous operation without key rotation accumulates $\varepsilon_{\text{total}} \approx 357.7$ over 24 hours. By enforcing cryptographic epoch rotation every $T=60$ batches ($10$ minutes), \codename caps the per-epoch privacy loss at $\varepsilon_{\text{epoch}} \approx 11.3$.

\textbf{Remark 3} (Multi-Epoch Composition and Privacy Scope)\textbf{.} \textit{To clarify the boundaries of the system's protections: \codename \textbf{formally protects} individual $100$-ms power anomalies within each $10$-minute epoch (bounded by the local DP guarantee $\varepsilon_{\text{epoch}} \approx 11.3$); \codename \textbf{empirically protects} macro-workload schedules under heterogeneous co-tenancy via GAE aggregation dilution; and \codename \textbf{does not protect} long-term workload schedules from repeated epoch observations across infinite horizons.}

\textbf{Security against a Compromised GAE (Goal 1 vs. Goal 3).} We assume the GAE host environment is operated honestly. If the GAE is compromised, the adversary's advantage in recovering the provider's exact workload distribution from individual submitted tuples is bounded strictly by the local DP budget ($\varepsilon_{\text{epoch}} \approx 11.3$).

\textbf{Theorem 1} (Conditional Security of \codename .) Under Assumptions (i) EUF-CMA unforgeability of the TDX quoting key, (ii) collision resistance of SHA-256, (iii) IND-CCA security of AES-GCM-256, (iv) correctness of the SPDM session with a hardware SPDM responder, and (v) the Confidentiality bound (Goal 1) holds against software-mediated host tampering under TDX LI mode (hardware-level DMA-based memory injection against MKTME is outside the threat model), \codename satisfies: (a) Confidentiality (Goal 1) for any host-controlled adversary, (b) Post-extraction Execution Integrity (Goal 2) with $0\%$ ASR, and (c) Event-level (per-batch $\varepsilon=1, \delta=10^{-6}$)-DP Output Privacy (Goal 3). Assumption (iv) is not satisfied by the evaluated GCP Sapphire Rapids prototype; on that hardware, Goal 2 extends only to the post-extraction layer, and pre-ingestion integrity reverts to the honest-but-curious threat model.
\section{Implementation and Baselines}
\label{sec:implementation}

\subsection{Software and Hardware Stack}

The \codename LSE and GAE are implemented in Rust. The LSE interfaces with the TDX module for DCAP quote generation using \texttt{tdx-attest-rs} and integrates \texttt{libspdm} via FFI. Because GCP C3 instances lack Intel TDX Connect for bare-metal PCIe passthrough, we employ \texttt{spdm-emu} to simulate the hardware SPDM responder over a virtualized socket. The GCP TDX deployment uses Logical Integrity (LI) mode. Both components use \texttt{ed25519-dalek} for ephemeral signing. The DP module enforces $\Delta_2(f) = \sqrt{6}$ before sampling $\mathcal{N}(0, \sigma^2)$ (see Table~\ref{tab:dp_parameters} for deployed values). Communication utilizes asynchronous TLS.

\begin{table}[ht]
\centering
\caption{Deployment parameters for the local DP mechanism. The applied noise scale $\sigma_{\text{AGM}} = 10.35$ yields $\varepsilon_{\text{epoch}} \approx 11.3$ via the RDP composition formula (Eq.~\ref{eq:composition_closed}). While evaluating the exact Analytic Gaussian Mechanism (AGM) yields a tighter $\varepsilon_{\text{AGM}} \approx 10.8$ at $T=60$, the minimal $4\%$ divergence confirms the RDP conversion remains a sufficiently tight and valid analytical bound.}
\label{tab:dp_parameters}
\begin{tabular}{cccccc}
\toprule
$\varepsilon$ & $\delta$ & $\Delta_2(f)$ & $\sigma_{\text{AGM}}$ & $\sigma_{\text{RDP\_approx}}$ & $\varepsilon_{\text{epoch}} (T=60)$ \\
\midrule
$1$ & $10^{-6}$ & $\sqrt{6}$ & $10.35$ & --- & $11.3$ \\
\bottomrule
\end{tabular}
\end{table}

We deployed \codename on Google Cloud Platform (GCP) across four regions (\texttt{us-central1}, \texttt{us-east5}, \texttt{europe-west4}, \texttt{asia-southeast1}; 8 LSEs per region). Microbenchmarks utilized isolated GCP Confidential VMs (\texttt{c3-standard-4} for LSE, \texttt{c3-standard-8} for GAE) running Ubuntu 22.04 LTS on TDX-enabled Sapphire Rapids processors.

Although we evaluate via \texttt{spdm-emu}, physical hardware Roots of Trust (e.g., GPU SPDM responders) secure the first-mile guarantee. A full SPDM session re-establishment is triggered only upon LSE reboot, hardware reset of the accelerator, or explicit revocation of the ephemeral binding key.

\subsection{GPU Power Trace Collection and Markov Chain Validation}
\label{sec:trace_collection}

We evaluate \codename against $24$-hour per-GPU power traces from three GCP accelerator families, replayed at $10$\,Hz into the LSE Unix-socket ingest endpoint to preserve the live-sensor pipeline. To preserve the independent $\sqrt{N}$ noise-reduction assumptions for dynamic orchestration evaluation despite replaying traces from $17$ physical GPUs across $32$ LSEs, duplicated traces were subjected to a strict $\ge 1$-hour temporal offset. 

\textbf{Collection setup.} We provisioned one \texttt{a3-highgpu-8g} instance ($8{\times}$ H100 SXM5), one \texttt{a2-highgpu-8g} instance ($8{\times}$ A100 SXM4), and one \texttt{g2-standard-4} instance ($1{\times}$ L4) on GCP. Workloads included MLPerf Training v4.0 JAX BERT-Large (H100), PyTorch ResNet-50 (A100), and continuous Stable Diffusion v2.1 inference (L4). Power samples were recorded at $10$\,Hz via \texttt{pynvml}. The pooling parameter is fixed at $k=10$ samples ($1$-second blocks) across all evaluations. To isolate the evaluation of the noise mechanism, the $17$ multi-GPU traces were decoupled such that each of the $32$ simulated LSEs ingests exactly one single-GPU trace (i.e., $G=1$). Specifically, the $32$ LSEs are distributed as $11$ H100 traces, $11$ A100 traces, and $10$ L4 traces. Because $G=1$, the theoretical maximum number of transitions per $10$-second batch remains strictly bounded at $9$. This operational choice formally ensures the mathematical $\ell_2$-sensitivity of the local DP mechanism identically matches the single-device analytical bound of $\Delta_2(f) = \sqrt{6}$ derived in §\ref{sec:security_analysis}, precluding the need for aggregate-state re-derivation. The $|\mathcal{S}|=5$ threshold bands are derived from each GPU's rated TDP and empirically observed idle power floor (details in Supplementary Material), mapping to \textsc{Idle}, \textsc{Low}, \textsc{Med}, \textsc{High}, and \textsc{Peak}.

\textbf{Markov chain validation.} We fitted $5$-state discrete-time Markov chains to each trace, extracting the stationary distribution $\pi_{\text{real}}$ and spectral gap $\gamma_{\text{real}} = 1 - |\lambda_2|$. Table~\ref{tab:mc_validation} compares these empirical properties against a baseline Gaussian-mixture synthetic model.

\begin{table}[ht]
\centering
\caption{Gaussian-mixture synthetic model vs.\ real NVML-sampled Markov chains ($|\mathcal{S}|=5$). $\gamma$ denotes the spectral gap ($1 - |\lambda_2|$). $\|M_{\text{synth}} - M_{\text{real}}\|_1$ is the average per-row $\ell_1$ distance between row-normalised matrices. \textbf{Note:} The empirical variance in $\gamma_{\text{real}}$ across H100 (JAX) and A100 (PyTorch) traces inherently conflates microarchitectural hardware differences with framework-specific scheduling behaviors (e.g., memory barriers). Identifying the isolated hardware effect remains an area for future work.}
\label{tab:mc_validation}
\resizebox{\columnwidth}{!}{
\begin{tabular}{@{}lccccc@{}}
\toprule
\textbf{Workload} & \textbf{Trace} & $\gamma_{\text{synth}}$ & $\gamma_{\text{real}}$ & $\pi_{\text{real}}$ (Idle, Low, Med, High, Peak) & $\|M_{\text{synth}} - M_{\text{real}}\|_1$ \\ \midrule
Synthetic & -- & $0.18$ & -- & $[0.20, 0.20, 0.20, 0.20, 0.20]$ & -- \\
BERT & H100 (SXM5) & $0.18$ & $0.13$ & $[0.11, 0.04, 0.08, 0.36, 0.41]$ & $0.84$ \\
ResNet & A100 (SXM4) & $0.18$ & $0.11$ & $[0.08, 0.02, 0.05, 0.32, 0.53]$ & $1.02$ \\
Stable Diff. & L4 (PCIe) & $0.18$ & $0.12$ & $[0.14, 0.10, 0.09, 0.41, 0.26]$ & $0.67$ \\ \bottomrule
\end{tabular}
}
\end{table}

Training workloads (H100, A100) exhibit smaller real spectral gaps than the synthetic model due to sustained high-power compute phases separated by synchronisation barriers. For the L4 inference trace, hardware polling aliasing at 5--10\,Hz artificially inflates self-loop counts, depressing its empirical spectral gap to $\gamma_{\text{real}}=0.12$. This aliasing acts as a structural fail-safe: lowering $\gamma$ strictly increases the peak-margin heuristic, yielding a more conservative bound. 

\textbf{Stationarity check.} A formal likelihood-ratio test (LRT) over non-overlapping sub-windows strictly rejects time-homogeneity ($p < 10^{-5}$), which is statistically expected for dynamic training workloads. However, sliding-window analysis confirms structural stability: worst-case variance ($\delta_\gamma = -0.009$) shifts the predicted capacity requirement by a negligible $+5.9$\,W per GPU. Thus, the macro-stationarity violation is physically immaterial and safely absorbed within the $4.8\%$ conservative over-estimation buffer (§\ref{sec:eval_heuristic_validation}).

\subsection{Comparative Baselines}

We benchmark \codename against four baselines:

\begin{itemize}
    \item \textbf{Baseline A (Cryptographic MPC):} MP-SPDZ~\cite{KellerMPSPDZ} aggregation. Vulnerable to first-mile attacks (the OS can spoof telemetry before the MPC boundary).
    \item \textbf{Baseline B (Centralized TEE):} A single GAE ingests raw $0.1$-second streams. Represents the theoretical utility ceiling but demands massive WAN egress bandwidth.
    \item \textbf{Baseline C (SecAgg):} Federated Learning Secure Aggregation~\cite{BonawitzSecAgg}. Offers zero execution integrity against malicious data fabrication.
    \item \textbf{Baseline D (Software LDP, No TEE):} Local DP without TDX protection. Acts as a structural control, confirming that executing the DP mechanism inside a TD does not perturb mathematical utility.
\end{itemize}
\section{Evaluation}
\label{sec:evaluation}

We evaluate \codename across 32 Google Cloud Confidential VMs (8 LSEs/region across \texttt{us-central1}, \texttt{us-east5}, \texttt{europe-west4}, \texttt{asia-southeast1}) during a $72$-hour continuous deployment. Throughput and latency metrics include $95\%$ confidence intervals over $30$ runs.

\subsection{System and Cryptographic Microbenchmarks}
\label{sec:eval_microbenchmarks}

Per-LSE single-session throughput sustains $131{,}406$ samples/s.

\textit{Multi-Session Multiplexing Scaling.} We evaluate LSE throughput degradation during concurrent SPDM session multiplexing via asynchronous I/O. As shown in Table~\ref{tab:multi_session}, aggregate throughput remains flat through $K \le 64$ but degrades super-linearly as the aggregate working set exceeds the $2$\,MB per-core L2 cache boundary. At $K=1024$, the throughput of $82{,}467$ samples/s retains an $8.1\times$ processing headroom above the required $10{,}240$ samples/s ingestion rate for a $1{,}024$-GPU pod.

\begin{table}[ht]
\centering
\caption{Multi-Session Multiplexing Throughput Scaling. Values report the median $\pm$ 95\% CI.}
\label{tab:multi_session}
\resizebox{\linewidth}{!}{
\begin{tabular}{@{}lrrrc@{}}
\toprule
$K$ (Concurrent) & Total Throughput & L1d Miss & L2 Miss & Instr/Cycle \\ \midrule
1     & $131{,}406 \pm 410$   & 4.2\% &  1.1\% & 2.14 \\
8     & $130{,}890 \pm 520$   & 4.5\% &  1.1\% & 2.11 \\
64    & $128{,}100 \pm 840$   & 8.1\% &  1.8\% & 1.95 \\
256   & $104{,}500 \pm 1{,}100$& 14.3\% & 6.2\% & 1.62 \\
1024  & $82{,}467 \pm 1{,}450$ & 22.1\% & 15.4\% & 1.10 \\ \bottomrule
\end{tabular}
}
\end{table}

\textit{Attestation Overhead Amortisation.} The heavy DCAP $q_{\text{init}}$ generation incurs a median cold-start latency of $81.2$\,ms. Because this periodic quote generation executes asynchronously, it overlaps with continuous ingestion. The steady-state ingestion throughput is dictated solely by the per-batch attestation and I/O path. Per-submission GAE verification latency decomposes as shown in Table~\ref{tab:gae_latency}. The total per-batch compute time of $4.90$\,ms is dominated by the cryptographic signature verification and the synchronous TDX disk seal for counter persistence.

\begin{table}[ht]
\centering
\caption{Decomposition of GAE per-batch compute latency ($32$ nodes).}
\label{tab:gae_latency}
\small
\begin{tabular}{@{}lcc@{}}
\toprule
\textbf{Component} & \textbf{Latency (ms)} & \textbf{\% Total} \\ \midrule
TLS Payload Deserialization & $0.12$ & 2.4\% \\
PKI \& MRTD Eval & $0.05$ & 1.0\% \\
Ed25519 Quote Verification ($O(n)$) & $2.84$ & 58.0\% \\
Capacity-Weighted Aggregation & $0.27$ & 5.5\% \\
Sealed Disk Persist (Counter Sync) & $1.62$ & 33.1\% \\ \midrule
\textbf{Total Per-Batch Compute} & \textbf{4.90} & \textbf{100.0\%} \\ \bottomrule
\end{tabular}
\end{table}

\textit{TDX and SPDM Overhead Decomposition.} Steady-state per-sample overhead comprises: (i) native baseline ($3.1$\,$\mu$s/sample), (ii) TDX virtio-vsock I/O and MKTME overhead ($4.1$\,$\mu$s), (iii) amortised per-batch Ed25519 attestation binding ($0.23$\,$\mu$s/sample), and (iv) SPDM AEAD MAC verification ($0.18$\,$\mu$s/sample, measured via \texttt{spdm-emu} over \texttt{vsock}). We explicitly caution that this $0.18\,\mu$s figure is an artifact of the software emulator; a production deployment terminating SPDM over physical PCIe IDE/TDISP will encounter orders-of-magnitude higher latency (typically 100--500\,$\mu$s per roundtrip). These sum to $7.61$\,$\mu$s/sample, validating the $131{,}406$\,samples/s throughput.

\subsection{Cross-Platform GAE Scalability}
\label{sec:eval_scalability}

We evaluated GAE aggregation latency as participating LSEs scale from $1$ to $32$ nodes across four GCP regions.

Aggregation scales linearly ($0.15$\,ms/LSE, Fig.~\ref{fig:gae_scalability}), dominated by $O(n)$ per-quote signature verification. At the full 32-node deployment, the GAE completes per-batch aggregation in $4.9$\,ms (median), well within the $10$-second batch window.

\textit{End-to-end batch-completion latency.} The $4.9$\,ms figure measures GAE compute time from receipt of the final submission in the batch window to publication of the global matrix; it excludes WAN transit. We instrumented submission timestamps at each LSE and recorded the one-way transit latency from each region to the GAE (co-located in \texttt{us-central1}) over the $72$-hour stability run. The 95th-percentile one-way transit latency from the farthest region (\texttt{europe-west4}) was $109$\,ms. The end-to-end batch-completion latency (from the start of the $10$-second window to publication of the global matrix) is therefore $4.9 + 109 = 113.9$\,ms at the 95th percentile, comfortably within the $10$-second batch window and well below the $20$-second freshness threshold enforced by the monotonic-counter replay check.

\begin{figure}[ht]
\centering
\begin{tikzpicture}
\begin{axis}[
    width=0.8\columnwidth,
    height=3.2cm,
    xlabel={Number of LSEs ($n$)},
    ylabel={Latency (ms)},
    xmin=0, xmax=35,
    ymin=0, ymax=6,
    xtick={1, 5, 10, 15, 20, 25, 32},
    legend style={at={(0.5,1.5)}, anchor=north, legend columns=-1},
    grid=both,
    grid style={line width=.1pt, draw=gray!20}
]
\addplot[
    only marks,
    mark=*,
    color=blue,
    error bars/.cd,
    y dir=both, y explicit
] coordinates {
    (1,0.2)  +- (0,0.08)
    (5,0.8)  +- (0,0.09)
    (10,1.6) +- (0,0.12)
    (15,2.3) +- (0,0.14)
    (20,3.1) +- (0,0.15)
    (25,3.8) +- (0,0.17)
    (32,4.9) +- (0,0.19)
};
\addplot[
    domain=1:32,
    color=red,
    thick
] {0.15*x + 0.05};
\addlegendentry{Empirical Results}
\addlegendentry{Local OLS Fit ($n \in [1, 32]$)}
\end{axis}
\end{tikzpicture}
\caption{GAE per-batch aggregation latency vs. participating LSEs (median $\pm$ 95\% CI over 30 runs). The decomposed scaling law yields $1.62 + 0.101n$\,ms, comfortably bounded within the 10s batch window.}
\label{fig:gae_scalability}
\end{figure}

To simulate realistic multi-operator WAN conditions across the GCP backbone, we injected artificial network jitter ($150$\,ms RTT, $1$\% packet loss) via \texttt{tc qdisc}. Despite this interference, the GAE successfully maintained hardware-specific sub-models for H100, A100, and L4 traces without triggering queuing delays that would violate the $10$-second temporal-batch boundary. Following the simulated TCB revocation event, the GAE completed its state-machine recovery relying on pre-fetched cached collateral (invalidated via a strict $24$-hour TTL to prevent stale-PCK authorization) within $12.4$\,ms.

\subsection{Privacy--Utility Pareto Against Baselines}
\label{sec:eval_utility}

We evaluate noised matrix fidelity for grid provisioning against a plaintext ground-truth model derived from centralised NVML trace replay. We compute the \codename global model at varying $\varepsilon$ (with $\delta = 10^{-6}$) and evaluate both via a standard dynamic orchestration calculator (cf.~\S\ref{sec:design_capacity}) assuming a total facility load $C_{\text{total}} = 200$\,MW.

To position \codename within the privacy-utility design space, we evaluate five contemporaneous schemes on the same axis:
\begin{itemize}
    \item \textit{\codename:} attested edge-DP via the LSE.
    \item \textit{Centralised-TEE (Baseline B):} no DP; the ground-truth utility ceiling.
    \item \textit{Cryptographic MPC (Baseline A):} secure multi-party computation over plaintext matrices. Offers no DP privacy loss (utility matches Baseline B) but scales poorly and lacks pre-ingestion execution integrity.
    \item \textit{Central-DP-after-SecAgg:} matrices aggregated under SecAgg, then central Gaussian noise applied at the aggregator (the standard FL-DP recipe~\cite{KairouzAdvancesFL}).
    \item \textit{Local-DP-only (no TEE):} each provider applies the same Gaussian noise locally without TDX protection (this is the lower-trust point of comparison).
\end{itemize}

We performed a $50$-point sweep across per-batch $\varepsilon \in [0.01, 10]$, executing $1{,}000$ Monte Carlo replicates of the DP noise injection ($\Delta_2(f) = \sqrt{6}$) to map the privacy-utility Pareto frontier (see Figure~\ref{fig:pareto_utility}).

\begin{figure}[t]
\centering
\begin{tikzpicture}
\begin{semilogxaxis}[
    width=0.76\columnwidth,
    height=4.8cm,
    xlabel={\small Per-Batch Privacy Budget ($\varepsilon$)},
    ylabel={\small Orchestration Error (MW)},
    xmin=0.01, xmax=10,
    ymin=0, ymax=10,
    grid=both,
    grid style={line width=.1pt, draw=gray!30},
    major grid style={line width=.2pt,draw=gray!50},
    legend style={at={(0.5,1.35)}, anchor=north, font=\scriptsize, fill=none, draw=gray!50, legend columns=2},
    x dir=reverse, 
    xtick={10, 1, 0.1, 0.01},
    xticklabels={$10^1$, $10^0$, $10^{-1}$, $10^{-2}$},
    extra x ticks={1},
    extra x tick style={grid=major, grid style={dashed, thick, black}},
]

\addplot[
    color=blue,
    mark=square*,
    mark options={fill=none, solid},
    style=thick,
] coordinates {
    (10, 0.2) (1, 1.3) (0.1, 4.2) (0.01, 9.8)
};
\addlegendentry{\codename}

\addplot[
    color=red,
    mark=*,
    mark options={fill=none, solid},
    style=thick,
    dashed
] coordinates {
    (10, 0.05) (1, 0.1) (0.1, 0.8) (0.01, 2.5)
};
\addlegendentry{Central-DP-after-SecAgg}

\addplot[
    color=orange,
    mark=triangle*,
    mark options={fill=none, solid},
    style=thick,
    dotted
] coordinates {
    (10, 0.2) (1, 1.3) (0.1, 4.2) (0.01, 9.8)
};
\addlegendentry{Software LDP (No TEE)}

\addplot[
    color=black,
    style=thick,
    solid
] coordinates {
    (10, 0) (0.01, 0)
};
\addlegendentry{Centralized TEE / MPC (Baselines B \& A)}

\end{semilogxaxis}
\end{tikzpicture}
\caption{Privacy-Utility Pareto Front: Dynamic Orchestration Error vs. Per-Batch $\varepsilon$. \codename (blue) perfectly overlaps Software LDP (orange), confirming the TDX wrapper preserves statistical utility. The $1.2$\,MW gap to Central-DP (red) is the cost of decentralised trust. Baselines A \& B lack DP (0\,MW error).}
\label{fig:pareto_utility}
\end{figure}

While per-batch $\varepsilon$ is the operative configuration knob, the resultant privacy guarantee over a complete $10$-minute epoch ($T=60$ batches) is $(\varepsilon_{\text{epoch}} \approx 11.3, \delta=10^{-6})$-DP, as derived in §\ref{sec:security_analysis}. Infrastructure operators should treat $\varepsilon_{\text{epoch}}$ as the primary deployment metric. Interpreting this in a deployment context: an $\varepsilon_{\text{epoch}} \approx 11.3$ budget bounds the exposure of individual $100$-ms anomalies within a $10$-minute window, but it does not formally protect macro-workload identity (see empirical bounds below). Operators requiring tighter formal guarantees can leverage the system's Pareto flexibility by reducing epoch duration (e.g., yielding $\varepsilon_{\text{epoch}} \approx 3.2$ at $T=6$) or lowering the per-batch $\varepsilon$, explicitly trading utility for stricter budgets.

At a per-batch $\varepsilon = 1$ (epoch-level $\varepsilon_{\text{epoch}} \approx 11.3$), \codename achieves a dynamic orchestration margin error of $1.3$\,MW (CI: $1.1$--$1.5$\,MW) on the real NVML traces, while the Central-DP-after-SecAgg variant achieves $0.1$\,MW (CI: $0.0$--$0.2$\,MW). Cryptographic MPC (Baseline A), lacking DP noise, perfectly matches the Centralised-TEE baseline ($0.0$\,MW error relative to ground truth). Central-DP-after-SecAgg limits noise scale by adding it centrally over an aggregate matrix, yielding a lower error at the fatal cost of assuming an honest aggregator. Baseline D (Software LDP, No TEE) perfectly matches \codename's $1.3$\,MW utility by design ($1.3$\,MW $\equiv$ $1.3$\,MW), confirming that the enclave wrapper does not perturb the mathematical utility; however, Baseline D offers zero defence against malicious host spoofing. \codename accepts a $1.2$\,MW error penalty relative to central DP and a $1.3$\,MW penalty relative to MPC; in exchange, its DCAP pipeline mathematically precludes host-level alteration ($0.0$\% ASR, §\ref{sec:eval_byzantine}) and its SPDM architecture guarantees pre-ingestion provenance. Neither SecAgg, MPC, nor LDP offers both protections. Ultimately, the $\Delta_2(f) = \sqrt{6}$ sensitivity and cross-provider noise averaging ensure the $1.3$\,MW error remains well within the tolerance of a $200$\,MW facility.

\textit{Variance reconciliation against theoretical noise multipliers.} The total $1.3$\,MW facility figure aggregates across all three hardware sub-models (H100: $11$ LSEs, A100: $11$ LSEs, L4: $10$ LSEs) at a capacity-weighted $C_{\text{total}} = 200$\,MW. For the per-hardware aggregate, local-DP error scales as $\sigma/\sqrt{N}$ where $N$ is the number of LSEs. For the H100 hardware sub-model specifically ($N=11$ same-hardware LSEs), the theoretical input-noise multiplier is bounded by $\sqrt{11} \approx 3.32$; the remaining gap is explained by the inverse Mills ratio bias detailed below. The empirical output error ratio ($1.3$\,MW vs.\ $0.1$\,MW) diverges due to pre-aggregation statistical thresholding. Discarding cells where the noisy count $v = c+Z < 1.645\sigma$ systematically removes samples with large negative noise $Z$. For surviving genuine transitions ($c>0$), the conditional expectation is biased positively:
\begin{equation*}
\mathbb{E}[c + Z \mid c + Z \ge 1.645\sigma] = c + \sigma \frac{\phi(1.645 - c/\sigma)}{1 - \Phi(1.645 - c/\sigma)}
\end{equation*}
where the positive bias is governed by the inverse Mills ratio. Weighting this closed-form bias term by the empirical cell frequencies yields an analytical mass-inflation estimate of $\approx 0.48$\,MW, perfectly matching the $0.5$\,MW systematic gap ($0.8$\,MW $\to$ $1.3$\,MW) observed during ablation. This positive bias inflates total transition mass prior to row-normalisation, narrowing the apparent spectral gap and compounding the final orchestration error. We explicitly accept the compounded $1.3$\,MW error as the mathematically unavoidable cost of preserving the tight $\Delta_2(f) = \sqrt{6}$ DP bound required for continuous streaming release at the evaluated $G=1$ scale. Under the designed $G=8$ deployment architecture (§\ref{sec:design_lse}), the per-batch transition count expands from a maximum of $9$ to $72$. This increased transition density shifts a substantially larger fraction of genuine cells into the $>50$ count regime, mitigating the severe pre-aggregation truncation bias governed by the inverse Mills ratio. Our inverse-Mills bias model projects that this improved signal-to-noise ratio would reduce the total orchestration error from $1.3$\,MW down to approximately $0.4$\,MW. Exhaustive empirical validation of this projection against live multi-GPU traces is deferred to future work.

The $T=60$ configuration prioritises microgrid utility. To satisfy stricter privacy mandates, operators can navigate the Pareto frontier: at $T=6$ ($60$-second epochs) with fixed per-batch $\varepsilon=1$, Equation~\eqref{eq:composition_closed} tightens the cumulative bound to $\varepsilon_{\text{epoch}} \approx 3.2$. Aggregating fewer matrices smoothly increases orchestration error to $2.1$\,MW (CI: $1.8$--$2.4$\,MW). \codename's tunable epoch length provides a formal switch between a high-utility regime ($\varepsilon_{\text{epoch}} \approx 11.3$, $1.3$\,MW error) and a strong-privacy regime ($\varepsilon_{\text{epoch}} \approx 3.2$, $2.1$\,MW error) without algorithmic modification. Operators targeting $\varepsilon_{\text{epoch}} \le 3.2$ may further reduce $T \le 5$ or lower the per-batch $\varepsilon$, accepting proportional utility degradation.

\textit{Empirical privacy auditing.} We instantiate a worst-case canary auditor following Steinke et al.~\cite{SteinkeAuditing2023}: the canary is the boundary-transition sequence $A \to B \to A$ that maximises the per-sample $\ell_2$ perturbation under our extraction map. We executed $100{,}000$ paired runs over a single $10$-minute epoch ($T = 60$ batches), computing for each pair the membership-inference classifier's true-positive and false-positive rate as in \cite[Algorithm 1]{SteinkeAuditing2023}. The empirical lower bound on the achieved epoch-level privacy loss is $\hat\varepsilon_{\text{epoch}} \geq 8.2$ at $95\%$ confidence (one-sided Clopper–Pearson). The $27\%$ gap between the empirical bound and the analytical upper bound $\varepsilon_{\text{epoch}} \approx 11.3$ is attributable to the conservatism of the Clopper–Pearson interval at finite sample sizes ($N=100{,}000$) rather than looseness in the RDP analytical bound, which is known to be tight for the Gaussian mechanism at the worst-case input pair~\cite{MironovRDP}. We cannot rule out a contribution from the gap between the Mironov RDP upper bound and the exact worst-case $\varepsilon$; however, since the RDP bound is known to be tight at the optimal order $\alpha^*$ for the Gaussian mechanism, Clopper–Pearson conservatism is the dominant explanation at $N=100{,}000$. The audit therefore confirms that the noise calibration is mathematically sound. We acknowledge that an $\varepsilon_{\text{epoch}} \approx 11.3$ represents a weak formal privacy guarantee (with $e^{11.3} \approx 8 \times 10^4$), functioning primarily to obfuscate isolated sub-second anomalies rather than providing rigorous cryptographic indistinguishability.

\textit{Empirical Workload Confidentiality (GAE Resilience).} While event-level DP obscures isolated anomalies, it does not unilaterally hide macro-workload identity prior to aggregation. To quantify pre-aggregation leakage, we trained a Random Forest classifier on individual DP-noised LSE matrices ($\varepsilon=1$). Focusing on the most rigorous control group (a same-hardware, 2-class discrimination task of SD versus BERT on H100 instances), the classifier achieved a $71.8\%$ cross-validation accuracy. This result is meaningfully above the $50\%$ random baseline, confirming that individual LSE matrices inherently leak structural workload signatures.\footnote{A 3-class classifier separating BERT (H100), ResNet (A100), and SD (L4) achieved $94.1\%$. This metric trivially conflates workload identity with hardware power regimes (700\,W, 400\,W, and 72\,W TDPs). When applied to TDP-normalised features ($P/\text{TDP}$), this 3-class accuracy drops to $61.4\%$. A binomial test confirms this residual 28-point margin above the $33.3\%$ random baseline is structurally significant ($n=25{,}920$, $p \ll 10^{-5}$), confirming the $94.1\%$ figure is overwhelmingly TDP-driven, although genuine cross-architecture structural leakage persists.} The \codename privacy model relies on the GAE to protect workload schedules via capacity-weighted aggregation. We simulated an adversary attempting to recover the workload of target provider $i$ from the globally released matrix $M_{\text{global}}^{(h)}$, aggregated across $|P_h| = 11$ same-hardware providers. Because the target's signal constitutes a minor fraction of the aggregate, its signature is massively diluted. Under heterogeneous co-tenancy, the post-aggregation 2-class accuracy collapsed to $52.3\%$, rendering it statistically indistinguishable from a random guess. At production scale with $|P_h| \gg 11$, spatial dilution is strictly stronger and classifier accuracy is expected to rapidly approach the $50\%$ baseline. Conversely, under a worst-case homogeneous scenario where all co-tenants execute an identical workload, the aggregated signal mutually reinforces the signature, elevating classifier accuracy above $98\%$. These results establish that the post-aggregation matrix effectively obscures workload schedules under heterogeneous co-tenancy.

\textit{Classifier Methodology.} The classifiers were trained on the exact $10$-second DP-noised transition matrices used for orchestration, yielding $8{,}640$ matrix observations per 24-hour hardware trace ($n=25{,}920$ total). To strictly preclude temporal leakage, we employed a $5$-fold blocked time-series cross-validation scheme, inserting a $1$-hour purge gap between training and testing blocks to ensure structural independence. The Random Forest utilized $100$ estimators, a maximum depth of $10$, and Gini impurity over the flattened $25$-feature matrix.

\subsection{Empirical Validation of the Peak-Margin Heuristic}
\label{sec:eval_heuristic_validation}

The peak-margin formula $L_{peak} = E[P] + C_{total} P_{\max}^{(h)}\,c\,\sqrt{1/\gamma}$ (cf.~Eq.~\eqref{eq:peakmargin_pmax}) maps the DP-sanitised matrix to a provisioning margin. Because $\sqrt{1/\gamma}$ theoretically grows unbounded as $\gamma \to 0$, we validate the formula against empirical $99$th-percentile hardware physical draws and evaluate its statistical stability near the $\gamma_{\text{min}} \approx 0.10$ validity boundary.

We fix the dimensionless safety factor to its deployment value $c \approx 0.083$ ($K=1$, $\eta=10^{-3}$, $N=1000$), held identical across all hardware (see Table~\ref{tab:kappa_sensitivity} for an ablation of grid-provisioning error against $c$). (Tightening the reliability threshold to a mission-critical $\eta=10^{-6}$ increases $c$ to $\approx 0.117$, scaling the orchestration margin proportionally but preserving the structural bound). Evaluating the per-GPU margin ($C_{total} = 1$), the H100 trace ($\gamma_{\text{real}} = 0.13$, $P_{\max}^{(H100)}=700$\,W, $E[P] = 486.8$\,W) predicts a peak margin of $648.1$\,W. The true empirical $99$th-percentile draw is $618.4$\,W, yielding a safe over-estimation error of $4.80\%$ (95\% CI: $2.1\%$--$8.5\%$). Similarly, the A100 trace ($\gamma_{\text{real}} = 0.11$, $P_{\max}^{(A100)}=400$\,W, $E[P]=281.3$\,W) predicts $381.5$\,W against a true $99$th-percentile draw of $363.5$\,W, a $4.95\%$ over-estimation error (95\% CI: $2.2\%$--$8.8\%$); the L4 trace ($\gamma_{\text{real}} = 0.12$, $P_{\max}^{(L4)}=72$\,W, $E[P]=52.4$\,W) predicts $69.7$\,W against a true draw of $66.3$\,W, a $5.12\%$ error (95\% CI: $2.4\%$--$8.8\%$). Each prediction respects the physical ceiling $L_{peak}^{(h)} \le P_{\max}^{(h)}$. To confirm statistical stability near the boundary, we bootstrapped the spectral gap estimator $\hat{\gamma}$ over $1{,}000$ independent DP-noised A100 matrix realisations ($\gamma_{\text{real}} = 0.11$, so the point estimate is $\sqrt{1/\gamma_{\text{real}}} = 3.02$). The scaling term $\sqrt{1/\hat{\gamma}}$ remains robust with a $95\%$ confidence interval of $2.71$--$3.44$ bracketing this point estimate, yielding a stable predicted $L_{peak}$ of $381.5$\,W (CI: $371.4$--$395.6$\,W). These results confirm that for modern generative AI workloads, the formula serves as a robust, conservative upper bound, over-estimating required capacity by an empirically bounded $4.8$--$5.1\%$ without catastrophic divergence at the spectral boundary.

\begin{table}[ht]
\centering
\caption{Sensitivity of grid-provisioning error to the safety factor $c$. Errors are evaluated over $1{,}000$ DP-noised replicates. The deployed $c \approx 0.083$ yields $1.3$\,MW error. Rows at $\gamma < \gamma_{\min}=0.10$ demonstrate the hard ceiling $L_{peak}^{(h)} \le P_{\max}^{(h)}$ preventing divergence.}
\label{tab:kappa_sensitivity}
\resizebox{\linewidth}{!}{
\begin{tabular}{@{}llcc@{}}
\toprule
$c$ (at $\gamma=0.12$) & $c/\sqrt{\gamma}$ & MW error (mean $\pm$ 95\% CI) & $\Delta$ from $c = 0.083$ \\
\midrule
$0.055$    & $0.158$ & $0.9$\,MW ($0.8$--$1.0$\,MW) & $-0.4$\,MW \\
$0.069$    & $0.199$ & $1.1$\,MW ($1.0$--$1.3$\,MW) & $-0.2$\,MW \\
$0.083$    & $0.239$ & $1.3$\,MW ($1.1$--$1.5$\,MW) & reference \\
$0.097$    & $0.280$ & $1.5$\,MW ($1.3$--$1.7$\,MW) & $+0.2$\,MW \\
$0.111$    & $0.320$ & $1.7$\,MW ($1.5$--$1.9$\,MW) & $+0.4$\,MW \\
\midrule
\multicolumn{4}{l}{\textit{Boundary limits at fixed $c = 0.083$ (ceiling cap activated)}} \\
\midrule
$0.083$ (at $\gamma=0.10$) & $0.262$ & $1.6$\,MW ($1.6$--$1.6$\,MW) & $+0.3$\,MW \\
$0.083$ (at $\gamma=0.09$) & $0.276$ & $1.7$\,MW ($1.7$--$1.7$\,MW) & $+0.4$\,MW \\
\bottomrule
\end{tabular}
}
\end{table}

\textit{Remark 2} (Orchestration Error Uniformity)\textit{.} \textit{The striking uniformity in over-estimation errors ($\approx 5\%$) across disparate GPU architectures stems mathematically from their shared utilization characteristics rather than parameter tuning. Let load factor $u = E[P]/P_{\max}$. For saturated generative AI workloads, $u \approx 0.70$ consistently across the H100 ($0.695$), A100 ($0.703$), and L4 ($0.728$) traces. Substituting $E[P] = u P_{\max}$ into the margin formula yields $L_{peak}^{(h)} = P_{\max}^{(h)}(u + c\sqrt{1/\gamma})$. Because the empirical 99th-percentile draw $P_{\text{true}}$ also scales proportionally to $P_{\max}$ for equivalent workloads (let $P_{\text{true}} = \alpha P_{\max}$, where $\alpha \approx 0.9$), the relative over-estimation error $\epsilon = (L_{peak} - P_{\text{true}})/P_{\text{true}} = (u + c\sqrt{1/\gamma} - \alpha)/\alpha$ becomes structurally independent of the hardware's absolute TDP $P_{\max}$. Thus, the formula intrinsically produces uniform relative margins for workload classes exhibiting consistent load factors and spectral gaps.}

\textit{Experimental Limitation (Hardware vs. Framework Conflation).} We explicitly note a limitation regarding the empirical traces: because the collected H100 trace executes JAX (BERT) while the A100 trace executes PyTorch (ResNet), the observed variance in empirical spectral gaps ($\gamma_{\text{real}} = 0.13$ vs. $0.11$) conflates underlying microarchitectural differences with framework-specific memory-access and communication scheduling differences. Consequently, the $\gamma_{\text{real}}$ differences cannot be attributed solely to hardware. Isolating the pure microarchitectural effect requires an identical-framework ablation (e.g., PyTorch BERT on H100 versus PyTorch BERT on A100), which we leave to future dedicated profiling.

\subsection{Byzantine Robustness Under Active Poisoning}
\label{sec:eval_byzantine}

We evaluate \codename against Baseline A (MPC) and Baseline C (SecAgg without TEE) under an active-poisoning adversary.

\textit{Setup.} We evaluate two structurally distinct attack classes. (i) \emph{Input Spoofing (First-Mile Poisoning)}: a Byzantine host OS generates synthetically flattened power traces and feeds them into the extraction pipeline to bias the aggregate grid model downward. (ii) \emph{Capacity-inflation Sybil attacks}: a Byzantine provider inflates its declared capacity $C_i$ to skew the capacity-weighted aggregation. For both, a fraction $f \in \{0.1, 0.2, 0.3, 0.4, 0.5\}$ of the $32$ providers is Byzantine. Each Byzantine host targets $L_{\text{peak}}$ downward bias (under-provisioning the microgrid). We define the \emph{attack success rate} (ASR) as the fraction of $1{,}000$ trials in which under-provisioning exceeds a critical threshold of $\tau = 50$\,MW relative to the plaintext ground truth.

\textit{Input Spoofing (First-Mile Poisoning).} This attack targets the fundamental vulnerability of software-only aggregation schemes: the lack of data provenance. Because MPC (Baseline A) and SecAgg (Baseline C) blindly ingest data from the untrusted host OS, the host can trivially pipe a synthetically generated trace (e.g., all \texttt{IDLE} transitions) into the local extraction pipeline. Once ingested, the cryptographic protocols faithfully protect this poisoned data, leading to ``Attested Garbage Out.'' Both Baseline A and Baseline C suffer a $100.0$\% ASR under input spoofing at $f=0.3$; this is an expected consequence of the software-only threat model, in which SecAgg is a cryptographic transit protocol, not a hardware-isolation primitive.

\textit{MP-SPDZ (Baseline A) Direct Measurement.} To benchmark Baseline A under realistic constraints, we measured the semi-honest SPDZ-2k protocol (128-bit computational, 40-bit statistical security) directly on the same 32-node GCP multi-region deployment under $150$\,ms injected RTT, executing $30$ independent trials (Table~\ref{tab:mpc_comparison}). The direct measurement reveals that per-batch latency is dominated by WAN transit (measured $183$\,ms/batch under a $75$\,ms one-way WAN delay, reflecting one full round trip plus offline preprocessing and reconstruction compute), and is consequently nearly independent of party count for $n \le 32$ at fixed RTT. Bandwidth scales linearly as $(n-1)$ share distributions of $25$ field elements per batch. The communication figure reported below is the per-node \emph{one-way egress} (i.e., bytes sent, excluding bytes received), measured via \texttt{tc} byte counters on each node's outbound interface; the analytical equivalent for 128-bit computational security is $(n-1) \times 25 \times 8$\,bytes per batch per node, where the online \texttt{Open} sends only the $8$-byte masked ring value per element (the remaining $8$ bytes of the $16$-byte SPDZ-2k element being the statistical/MAC share settled in offline preprocessing). This equals $(n-1)\times 25 \times 16/2$ and gives $363$\,KB per node per epoch at $n=32$, $T=60$\,batches, in agreement with the measurement. Because our threat model explicitly assumes malicious hosts (§\ref{sec:threat_model}), evaluating a semi-honest baseline alone is insufficient. We therefore include an analytical lower bound for MASCOT, a maliciously secure MPC protocol (Table~\ref{tab:mpc_comparison}). Upgrading to MASCOT demands heavy oblivious transfer extensions and MAC generation for triple production, typically imposing a $30$--$100\times$ bandwidth penalty depending on the specific OT-extension implementation~\cite{KellerMASCOT2016}, and requiring multiple sequential WAN round trips. We conservatively estimate a $>1500$\,ms per-batch latency and $>18$\,MB per-epoch egress under our 32-node topology. \codename provides malicious security via hardware attestation, avoiding this cryptographic circuit complexity entirely while achieving $4.9$\,ms latency and minimal per-epoch egress ($12.5$\,KB per node, derived as $213$\,bytes per batch $\times\,T{=}60$; where the $213$\,byte payload comprises a $5\times5$ \texttt{float32} matrix ($100$\,bytes), Ed25519 signature ($64$\,bytes), SHA-256 hash ($32$\,bytes), batch counter ($8$\,bytes), timestamp ($4$\,bytes), and provider ID ($4$\,bytes) plus $1$\,byte of framing). More critically, MPC provides no DP privacy guarantees without incurring prohibitive additional overhead, and remains fundamentally vulnerable to pre-ingestion host spoofing.

\begin{table}[ht]
\centering
\caption{Performance and security of MPC variants vs.\ \codename (32 nodes, $150$\,ms RTT). Baseline A was benchmarked directly, whereas MASCOT is analytically estimated~\cite{KellerMASCOT2016}. Bandwidth is reported as per-node one-way egress per $T=60$ epoch.}
\label{tab:mpc_comparison}
\resizebox{\linewidth}{!}{
\begin{tabular}{@{}lccc@{}}
\toprule
\textbf{Metric} & \textbf{MP-SPDZ (Semi-Honest)} & \textbf{MASCOT (Malicious)} & \textbf{\codename} \\ \midrule
Per-Batch Latency & $183 \pm 24$\,ms & $> 1500$\,ms (est.) & $4.9 \pm 0.19$\,ms \\
One-way Egress / Node & $364 \pm 11$\,KB & $> 10.9$\,MB to $> 36$\,MB (est.) & $12.5$\,KB \\
Post-Extract ASR & $0.0$\% & $0.0$\% & $0.0$\% \\
First-Mile ASR (eval.) & $100.0$\% & $100.0$\% & $100.0$\%\textsuperscript{\dag} \\
First-Mile ASR (arch.) & $100.0$\% & $100.0$\% & Prevented \\
\bottomrule
\multicolumn{4}{l}{\textsuperscript{\dag}\footnotesize Same as baselines; SPDM responder is \texttt{spdm-emu} (see \S\ref{sec:eval_byzantine}).}
\end{tabular}
}
\end{table}

\codename's defences against input spoofing operate at two distinct layers with different evaluation footings; we report each separately.

\textit{(a) Post-extraction attestation layer (experimentally validated on GCP TDX hardware).} A malicious host may attempt to bypass the DP noise injection after LSE computation: either by altering the extracted transition matrix in transit or by substituting the LSE binary with a variant that omits noise injection or exports the raw matrix. Both attacks are fully testable on the GCP prototype without requiring bare-metal SPDM connectivity. Rather than treating this as a probabilistic security evaluation, we utilize $5{,}000$ fault-injection trials (sweeping Byzantine fraction $f \in \{0.1, 0.2, 0.3, 0.4, 0.5\}$ with $1{,}000$ trials each) to sweep the space of malformed quotes, mismatched MRTDs, fabricated signatures, and tampered matrices; all $5{,}000$ trials were correctly rejected, validating the implementation against the deterministic security argument of §\ref{sec:threat_model} (the security property itself reduces to EUF-CMA of TDX quoting and SHA-256 collision resistance and is not in question). Byzantine hosts attempted: (i) submitting a matrix bearing a fabricated Ed25519 signature, and (ii) loading a modified LSE binary that transmits the plaintext transition matrix without noise. In case (i), the GAE's \texttt{VerifySignature} check rejected the submission because the signature is computed inside the TD with the hardware-bound ephemeral key $sk_E$. In case (ii), the hardware-generated DCAP quote for the modified binary reflected a mismatched \texttt{MRTD} field, which the GAE rejected at the \texttt{VerifyChain} step. Providing operational evidence that this rejection path fires under realistic conditions, the $72$-hour stability run (§\ref{sec:eval_sensitivity_robustness}) recorded $24$ MRTD-mismatch rejections and $67$ signature-failure rejections among $2{,}273$ total rejected submissions on live GCP Confidential VMs.

\textit{(b) Pre-ingestion first-mile layer (architectural sketch, conditional on hardware SPDM responders).} The deeper attack class, where a Byzantine host OS intercepts and fabricates telemetry \emph{before} the SPDM session ingests it, is theoretically prevented by the SPDM protocol and hardware identity binding. Under SPDM 1.2+ coupled with hardware-attested I/O (such as the PCIe IDE and TDISP standards targeted for emerging architectures like Intel Xeon 6), the accelerator's Root of Trust is designed to present a hardware-burned X.509 identity certificate. In this intended architectural model, the host OS acts only as a blind ciphertext router for the encrypted SPDM session and cannot forge AEAD session MACs without the hardware's private key. Any spoofing attempt theoretically causes AEAD MAC verification to fail inside the LSE memory region, triggering immediate batch invalidation. However, current-generation GCP C3 instances expose TDX but explicitly lack TDX Connect. Because the evaluation prototype replaces the hardware SPDM responder with the DMTF reference \texttt{spdm-emu} responder~\cite{DMTF_SPDM} running over a \texttt{vsock} transport on the same host, an attacker controlling the host OS can trivially extract the emulator's keys and spoof the telemetry. Therefore, on the evaluated hardware, \codename suffers the exact same $100.0$\% first-mile ASR as the software baselines. The first-mile defense is strictly an architectural design sketch; its real-world efficacy remains contingent upon the unforgeability of hardware-backed responder keys and the flawless implementation of these emerging I/O standards, which we identify as necessary follow-on experimental work.

\textit{Capacity Inflation Sybil Attacks.} Separately, if declared server capacities $C_i$ remain unverified, Byzantine providers can artificially inflate their capacity claims to dominate the capacity-weighted aggregation. Table~\ref{tab:sybil} demonstrates the catastrophic impact of unverified capacity claims on federated baselines. The Table sweeps the per-Sybil capacity multiplier while holding the Byzantine fraction at $f=0.1$ to isolate the effect of multiplier magnitude; at the higher $f=0.3$ setting referenced previously in the input spoofing evaluation, the naive (uncapped) SecAgg baseline at the $1{,}000\times$ multiplier yields an ASR of $>99.9\%$. Under PKI-registry capacity caps (an assumption we acknowledge as weaker than full attestation against a malicious operator), \codename caps the capacity multiplier at the hardcoded registry limit, suppressing the Sybil ASR to $0.0$\% at both $f=0.1$ and $f=0.3$. We emphasise that this result is a conditional \emph{PKI enforcement property}, achievable by any architecture enforcing registry limits. A malicious provider might also attempt \emph{hardware-type spoofing} (e.g., claiming A100 capacity while operating lower-power L4s to reduce noise contribution). Because the SPDM hardware identity explicitly binds to the physical Root of Trust certificate, cross-architecture spoofing is cryptographically prevented at the ingestion layer, rendering this attack computationally infeasible. \codename's ultimate mitigation against unbounded Sybil attacks lies in hardware-rooted capacity attestation (§\ref{sec:discussion}).

\begin{table}[ht]
\centering
\caption{Baseline Sybil Amplification vs. Attack Success Rate ($f=0.1$)}
\label{tab:sybil}
\begin{tabular}{@{}lc@{}}
\toprule
\textbf{Sybil Capacity Multiplier} & \textbf{SecAgg / MPC ASR} \\ \midrule
$1\times$ & 11.1\% \\
$10\times$ & 52.6\% \\
$100\times$ & 91.7\% \\
$1000\times$ & 99.1\% \\ \bottomrule
\end{tabular}
\end{table}

\subsection{Sensitivity and Operational Robustness}
\label{sec:eval_sensitivity_robustness}

\textit{Discretisation granularity ($|\mathcal{S}|$).} Sweeping $|\mathcal{S}| \in \{3, 4, 5, 7, 10\}$ at fixed $(\varepsilon, \delta) = (1, 10^{-6})$ and $W = 10$\,s, orchestration error is minimised at $|\mathcal{S}| = 5$ ($1.3$\,MW), with both finer and coarser granularities degrading utility ($1.4$\,MW at $|\mathcal{S}|=3$; $1.33$\,MW at $|\mathcal{S}|=4$; $1.7$\,MW at $|\mathcal{S}|=7$; $2.2$\,MW at $|\mathcal{S}|=10$). The non-monotonicity reflects the trade-off between abstraction fidelity and noise-floor scaling: at $|\mathcal{S}|=10$, $100$ cells are independently noised, raising the aggregate noise norm despite finer state resolution. The sparse transition structure of H100 and A100 training workloads (concentrated in the top two states) confirms $|\mathcal{S}|=5$ as the production default. This across-the-board reduction in absolute MW error, despite the higher baseline power $\mathbb{E}[P]$, is primarily driven by the retention of high-frequency self-loops, which restores the physically accurate, highly concentrated stationary distribution and dramatically increases the signal-to-noise ratio against the Analytic Gaussian noise floor.

\textit{Temporal block size ($k$).} The pooling window $k$ governs the temporal resolution and neutralizes boundary-straddling attacks (§\ref{sec:design_lse}). Sweeping $k \in \{5, 10, 20\}$ at fixed $W=10$\,s reveals a stark trade-off. At $k=5$ ($0.5$\,s blocks), the higher temporal resolution artificially inflates the boundary-straddling attack surface; a simulated malicious host oscillating the clock at $2$\,Hz successfully inflates the apparent spectral gap $\gamma$, increasing dynamic orchestration error to $1.9$\,MW due to induced under-provisioning. Conversely, at $k=20$ ($2.0$\,s blocks), over-pooling severely degrades temporal resolution, destroying genuine micro-transients and artificially depressing $\gamma$, which drives the over-provisioning error to $2.4$\,MW. The $k=10$ ($1.0$\,s) configuration physically absorbs high-frequency clock oscillations while preserving sufficient macroscopic state transitions, isolating the $1.3$\,MW optimal error.

\textit{Noise-suppression threshold ($p$).} The statistical threshold $p=0.95$ (where $\tau_{\text{noise}} = \Phi^{-1}(p)\sigma$) is chosen to aggressively suppress spurious transitions. Sweeping $p \in \{0.90, 0.95, 0.99\}$ at $\varepsilon=1$, the orchestration error is $1.6$\,MW at $p=0.90$, $1.3$\,MW at $p=0.95$, and $1.5$\,MW at $p=0.99$. We select $p=0.95$ as the empirical optimum, balancing the retention of genuine low-frequency transitions against the structural distortion caused by surviving noise artifacts.

\textit{Batch window ($W$).} Sweeping $W \in \{1, 5, 10, 30, 60\}$\,s at $(\varepsilon, \delta) = (1, 10^{-6})$, the joint-optimal point is $W = 10$\,s: $1.3$\,MW utility error at $0.23$\,$\mu$s/sample amortised attestation cost. Without epoch rotation, the 24-hour cumulative budget grows to $\varepsilon_{\text{total}} \approx 357.7$, so \codename rotates cryptographic epochs every $10$ minutes ($T=60$ batches), bounding the per-epoch privacy loss to $\varepsilon_{\text{epoch}} \approx 11.3$.

\textit{Stability and TCB-Recovery Handling.} The $W=10$\,s optimum was validated via $1{,}000$ Monte Carlo replicates (§\ref{sec:eval_utility}); the $72$-hour stability run uses $W=30$\,s to reduce cloud egress costs, providing a conservative upper bound on the $W=10$\,s result. Across the $72$-hour stability run, we recorded uninterrupted sustained ingestion at the target rate. The deployment processed $276{,}480$ attested submissions ($\approx 120$\,/LSE/hour across 32 LSEs). Because the DP margin $L_{peak}^{(h)}$ depends strictly on the $10$-minute epoch boundary and the total privacy budget, increasing the batch window to $30$\,s at a fixed epoch budget mathematically decreases the required per-batch noise, ensuring the $1.3$\,MW orchestration error validated at the joint-optimal $W=10$\,s remains a strictly preserved, conservative upper bound for this run. Of the $276{,}480$ submissions, $2{,}273$ were rejected by the GAE: $24$ for MRTD mismatch, $67$ for signature failure, $2{,}048$ for replay-window violation ($2{,}036$ of these were concentrated in the $8$-hour artificial-jitter injection phase of the scalability experiment, §\ref{sec:eval_scalability}; only $12$ violations occurred during the remaining $64$ hours of native-latency operation. This confirms the exact $20$\,s freshness window, which is strictly enforced via hardware monotonic counters rather than host-clock TTLs to prevent hypervisor snapshot rollback (§\ref{sec:discussion}), is well-calibrated. These $12$ violations over $\approx 245{,}760$ native-latency submissions establish a rigorous $0.0048\%$ upper bound on replay-window false positives over GCP's multi-region network), and $134$ for stale-TCB membership outside the configured grace period. These $134$ stale-TCB rejections were not artificially injected; rather, they resulted from an organic, unannounced GCP host-level microcode patch rolled out sequentially across our \texttt{us-central1} cluster between hours $42$ and $46$ of the run. Because the GAE enforces a strict $1$-hour grace period for TCB updates (to balance availability against zero-day exposure), the unpatched instances were correctly rejected once the grace period expired. Despite these rejections, zero providers experienced service interruption. 

\textit{Automated Failover Mechanism.} The failover executes autonomously without operator intervention. The GAE rejection is returned as a synchronous TLS \texttt{STALE\_TCB} error code during the batch submission. Because the ephemeral keypair and SPDM session map are mathematically bound to the enclave measurement, they cannot be exported or migrated. Instead, upon receiving the rejection, the primary LSE signals the local facility router to redirect the raw telemetry stream to a pre-provisioned secondary LSE held in warm standby (OS booted, enclave initialized) in a fully patched region. The secondary LSE immediately establishes a fresh SPDM session with the accelerator and generates a new DCAP \texttt{Quote}. Because the cold-start DCAP quote generation requires $81.2$\,ms and the SPDM handshake requires $<20$\,ms, the entire recovery sequence completes in $\approx 100$\,ms, easily finishing within the current $10$-second batch window to prevent structural data loss. This successfully demonstrates \codename's cross-region resilience to uncoordinated cloud infrastructure events.

\section{Related Work}
\label{sec:related}

\textbf{Confidential Federated Analytics on TEEs.} Prior work isolates federated computation inside hardware enclaves (\cite{MoPPFL}, \cite{ZhangCitadel}). \codename diverges in three respects: it targets streaming telemetry via temporal-batch attestation rather than periodic-round ML; it operates at TDX VM granularity, sidestepping SGX EPC paging limits; and it guarantees the integrity of a counting query rather than an iterative training loop.

\textbf{Streaming and Continual Differential Privacy.} Streaming frameworks provide theoretical utility bounds for continuous data release~\cite{DworkContinualRelease,ChanContinualCounter,CummingsContinual}. \codename adopts the Rényi-DP accountant~\cite{MironovRDP} for sequential composition, utilizing a strict closed-form $\ell_2$-sensitivity bound ($\Delta_2(f) = \sqrt{6}$) over transition matrices to retain practical utility at $\varepsilon = 1$ over multi-hour deployments.

\textbf{Local DP and the Shuffle Model.} The shuffle model~\cite{CheuShuffle,ErlingssonShuffle} amplifies local guarantees via a trusted shuffler. \codename operates as a hybrid: the LSE applies local DP inside a hardware-attested boundary, explicitly positioning the GAE as an attested replacement for a mixnet shuffler, and replacing trust amplification with cryptographic attestation.

\textbf{MPC and Secure Aggregation.} MPC frameworks~\cite{KellerMPSPDZ,SPDZ} provide hardware-independent cryptographic data protection. However, applying MPC to raw streams incurs prohibitive circuit complexity to verify SPDM AES-GCM MACs. Applying MPC or SecAgg~\cite{BonawitzSecAgg} to condensed matrices sacrifices pre-sharing execution integrity against fabricated inputs (§\ref{sec:eval_byzantine}). \codename relies on hardware enclaves to terminate the SPDM session securely and amortises the bandwidth cost via temporal-batch attestation.

\textbf{Data-Centre Power Telemetry.} Generative AI has catalyzed research into grid-level impacts~\cite{ChenGridImpacts2025}, gross carbon footprint~\cite{DodgeCarbon,PattersonCarbon}, and active power stabilization~\cite{ChouksePowerStab2025}. While frameworks like PowerAPI~\cite{PowerAPI} expose sub-second telemetry, they operate as plaintext tools for single-operator profiling. Concurrent work by Vercellino et al.~\cite{VercellinoAI2024} establishes the $0.1$-second-resolution requirement driving \codename's design, but relies on a trusted, single-operator environment. \codename provides the cryptographic execution integrity and DP mechanisms required to aggregate $10$-Hz transients across untrusted, multi-tenant boundaries.
\section{Discussion and Limitations}
\label{sec:discussion}

\textit{Legacy Hardware Compatibility (SPDM Adoption):} Legacy infrastructure lacking SPDM 1.2+ hardware responders cannot participate securely; \codename degrades to an honest-but-curious threat model equivalent to software LDP on these devices.

\textit{Counter Rollback and State Persistence:} While the GAE persists its session counter map to a sealed virtual disk to prevent offline tampering, this software-only mechanism on an isolated disk theoretically remains vulnerable to hypervisor snapshot-rollback attacks. However, \codename mitigates this threat entirely by integrating the attested time-sync protocol (Roughtime) described in Section~\ref{sec:attack_classes}. This integration neutralizes the snapshot-rollback vulnerability: any restored snapshot will inevitably present an expired Roughtime token to the GAE, causing the batch to be unconditionally rejected. Production deployments can thus achieve absolute cryptographic freshness without requiring hardware vTPM non-volatile counters.

\textit{DP Composition Exhaustion and Pan-Privacy:} Under Rényi-DP, cumulative privacy loss increases monotonically. Infinite-horizon telemetry theoretically requires Pan-Privacy~\cite{DworkContinualRelease}; however, true Pan-Privacy demands atomic, DP-noised state checkpoints at the hardware instruction level, which current TDX architectures lack, leaving LSE memory vulnerable to side-channel extraction prior to noise injection.

\textit{Hardware-Rooted Inventory Attestation (NVIDIA RIM):} The current implementation mitigates Sybil capacity-inflation via a trusted PKI registry capping $C_i$. Defending against unbounded Sybil behavior requires integrating hardware-rooted inventory attestation (e.g., NVIDIA RIM~\cite{NVIDIARIM}). Future iterations could bind $C_i$ to SPDM \texttt{GET\_MEASUREMENTS} requests issued directly to PCIe-attached GPUs, injecting this verified capacity into the TDX \texttt{REPORTDATA} field.
\section{Conclusion}
\label{sec:conclusion}

\codename provides a hardware-assisted framework for aggregating highly sensitive generative AI power telemetry. By migrating feature extraction and differential privacy noise injection to the provider's edge via Intel TDX Confidential VMs, \codename resolves the dichotomy between cryptographic communication bottlenecks and centralized trust, and characterises a hardware-dependent resolution for the first-mile data provenance gap. Our 32-node multi-region implementation, validated against real H100, A100, and L4 power traces, sustains $10$-Hz telemetry streams, amortises DCAP attestation overheads through temporal batching, and bounds the WAN footprint strictly below traditional MPC thresholds. The empirically validated $1.3$\,MW dynamic orchestration error (at $\varepsilon_{\text{epoch}} \approx 11.3$ across a $200$\,MW facility) confirms that data center operators can securely model high-frequency power transients and dynamically orchestrate facility microgrids while empirically obscuring proprietary workload schedules under heterogeneous co-tenancy. Future work will benchmark the end-to-end attestation pipeline natively on emerging hardware bridging PCIe TDISP and IDE with accelerator Root of Trusts.

\section*{Declaration of generative AI and AI-assisted technologies in the manuscript preparation process}
During the preparation of this work the authors used Google Gemini in order to assist with copy-editing, proofreading, and verifying the technical accuracy of hardware specifications. After using this tool/service, the authors reviewed and edited the content as needed and take full responsibility for the content of the published article.

\section*{Data and Code Availability}
Empirical data supporting the findings of this study are presented within the manuscript's tables, figures, and text. The \codename post-extraction pipeline, including the Rust enclave implementation, differential privacy mechanisms, and GCP deployment orchestration scripts, is available at \url{https://github.com/dkhme/ENCLAVE_SCALE}. The repository serves as a minimal reproducible artifact for empirical evaluation.

\bibliographystyle{elsarticle-num}
\bibliography{references}

@article{BalleWang2018,
  title={Improving the Gaussian Mechanism for Differential Privacy: Analytical Calibration and Optimal Denoising.},
  author={Borja Balle and Yu-Xiang Wang 0003},
  journal={ICML},
  year={2018}
}

@techreport{IntelTDX,
  title={Intel Trust Domain Extensions ({TDX})},
  author={Intel},
  institution={Intel Corporation},
  year={2023},
  url={https://www.intel.com/content/www/us/en/developer/articles/technical/intel-trust-domain-extensions.html}
}

@article{DworkDP,
  title={Differential Privacy: A Survey of Results},
  author={Dwork, Cynthia},
  journal={Theory and Applications of Models of Computation},
  pages={1--19},
  year={2008},
  publisher={Springer}
}

@misc{VercellinoAI2024,
  doi = {10.48550/ARXIV.2604.07345},
  url = {https://arxiv.org/abs/2604.07345},
  author = {Vercellino, Roberto and Willard, Jared and Campos, Gustavo and Pereira, Weslley da Silva and Hull, Olivia and Selensky, Matthew and Mueller, Juliane},
  title = {Measurement of Generative AI Workload Power Profiles for Whole-Facility Data Center Infrastructure Planning},
  publisher = {arXiv},
  year = {2026},
  copyright = {Creative Commons Attribution 4.0 International}
}

@misc{ChouksePowerStab2025,
  doi = {10.48550/ARXIV.2508.14318},
  url = {https://arxiv.org/abs/2508.14318},
  author = {Choukse, Esha and Warrier, Brijesh and Heath, Scot and Belmont, Luz and Zhao, April and Khan, Hassan Ali and Harry, Brian and Kappel, Matthew and Hewett, Russell J. and Datta, Kushal and Pei, Yu and Lichtenberger, Caroline and Siegler, John and Lukofsky, David and Kahn, Zaid and Sahota, Gurpreet and Sullivan, Andy and Frederick, Charles and Thai, Hien and Naughton, Rebecca and Jurnove, Daniel and Harp, Justin and Carper, Reid and Mahalingam, Nithish and Varkala, Srini and Kumbhare, Alok Gautam and Desai, Satyajit and Ramamurthy, Venkatesh and Gottumukkala, Praneeth and Bhatia, Girish and Wildstone, Kelsey and Olariu, Laurentiu and Incorvaia, Ileana and Wetmore, Alex and Ram, Prabhat and Raghuraman, Melur and Ayna, Mohammed and Kendrick, Mike and Bianchini, Ricardo and Hurst, Aaron and Zamani, Reza and Li, Xin and Petrov, Michael and Oden, Gene and Carmichael, Rory and Li, Tom and Gupta, Apoorv and Patel, Pratikkumar and Dattani, Nilesh and Marwong, Lawrence and Nertney, Rob and Kobayashi, Hirofumi and Liott, Jeff and Enev, Miro and Ramakrishnan, Divya and Buck, Ian and Alben, Jonah},
  title = {Power Stabilization for AI Training Datacenters},
  publisher = {arXiv},
  year = {2025},
  copyright = {Creative Commons Attribution Non Commercial Share Alike 4.0 International}
}

@article{PattersonCarbon,
  title={Carbon Emissions and Large Neural Network Training.},
  author={David A. Patterson 0001 and Joseph Gonzalez 0001 and Quoc V. Le and Chen Liang and Lluis-Miquel Munguia and Daniel Rothchild and David R. So and Maud Texier and Jeff Dean},
  journal={CoRR},
  year={2021}
}

@inproceedings{KellerMPSPDZ,
  title={{MP-SPDZ}: A Versatile Framework for Multi-Party Computation},
  author={Keller, Marcel},
  booktitle={Proceedings of the 2020 ACM SIGSAC Conference on Computer and Communications Security},
  pages={1575--1590},
  year={2020}
}

@inproceedings{BonawitzSecAgg,
  title={Practical Secure Aggregation for Privacy-Preserving Machine Learning},
  author={Bonawitz, Keith and Ivanov, Vladimir and Kreuter, Ben and Marcedone, Antonio and McMahan, H. Brendan and Patel, Sarvar and Ramage, Daniel and Segal, Aaron and Seth, Karn},
  booktitle={Proceedings of the 2017 ACM SIGSAC Conference on Computer and Communications Security},
  pages={1175--1191},
  year={2017}
}

@inproceedings{MoPPFL,
  title={{PPFL}: Privacy-Preserving Federated Learning with Trusted Execution Environments},
  author={Mo, Fan and Haddadi, Hamed and Katevas, Kleomenis and Marin, Eduard and Perino, Diego and Kourtellis, Nicolas},
  booktitle={Proceedings of the 19th Annual International Conference on Mobile Systems, Applications, and Services (MobiSys)},
  year={2021}
}

@inproceedings{ZhangCitadel, series={SoCC ’21}, title={Citadel: Protecting Data Privacy and Model Confidentiality for Collaborative Learning}, url={http://dx.doi.org/10.1145/3472883.3486998}, DOI={10.1145/3472883.3486998}, booktitle={Proceedings of the ACM Symposium on Cloud Computing}, publisher={ACM}, author={Zhang, Chengliang and Xia, Junzhe and Yang, Baichen and Puyang, Huancheng and Wang, Wei and Chen, Ruichuan and Akkus, Istemi Ekin and Aditya, Paarijaat and Yan, Feng}, year={2021}, month=Nov, pages={546–561}, collection={SoCC ’21} }

@inproceedings{DworkContinualRelease,
  title={Differential Privacy Under Continual Observation},
  author={Dwork, Cynthia and Naor, Moni and Pitassi, Toniann and Rothblum, Guy N},
  booktitle={Proceedings of the 42nd ACM Symposium on Theory of Computing (STOC)},
  pages={715--724},
  year={2010}
}

@inproceedings{ChanContinualCounter,
  title={Private and Continual Release of Statistics},
  author={Chan, T-H Hubert and Shi, Elaine and Song, Dawn},
  booktitle={Proceedings of the 37th International Colloquium on Automata, Languages, and Programming (ICALP)},
  pages={405--417},
  year={2010}
}

@inproceedings{CummingsContinual,
  title={Mean Estimation with User-level Privacy under Data Heterogeneity},
  author={Cummings, Rachel and Feldman, Vitaly and McMillan, Audra and Talwar, Kunal},
  booktitle={Advances in Neural Information Processing Systems (NeurIPS)},
  year={2022}
}

@inproceedings{MironovRDP,
  title={R{\'e}nyi Differential Privacy},
  author={Mironov, Ilya},
  booktitle={IEEE 30th Computer Security Foundations Symposium (CSF)},
  pages={263--275},
  year={2017}
}

@inproceedings{CheuShuffle,
  title={Distributed Differential Privacy via Shuffling},
  author={Cheu, Albert and Smith, Adam and Ullman, Jonathan and Zeber, David and Zhilyaev, Maxim},
  booktitle={Annual International Conference on the Theory and Applications of Cryptographic Techniques (EUROCRYPT)},
  pages={375--403},
  year={2019}
}

@inproceedings{ErlingssonShuffle,
  title={Amplification by Shuffling: From Local to Central Differential Privacy via Anonymity},
  author={Erlingsson, {\'U}lfar and Feldman, Vitaly and Mironov, Ilya and Raghunathan, Ananth and Talwar, Kunal and Thakurta, Abhradeep},
  booktitle={Proceedings of the Thirtieth Annual ACM-SIAM Symposium on Discrete Algorithms (SODA)},
  pages={2468--2479},
  year={2019}
}

@inproceedings{SPDZ,
  title={Multiparty Computation from Somewhat Homomorphic Encryption},
  author={Damg{\aa}rd, Ivan and Pastro, Valerio and Smart, Nigel and Zakarias, Sarah},
  booktitle={Annual Cryptology Conference (CRYPTO)},
  pages={643--662},
  year={2012}
}

@article{DodgeCarbon,
  title={Measuring the Carbon Intensity of {AI} in Cloud Instances},
  author={Dodge, Jesse and Prewitt, Taylor and Tachet des Combes, Remi and Odmark, Erika and Schwartz, Roy and Strubell, Emma and Luccioni, Alexandra Sasha and Smith, Noah A and DeCario, Nicole and Buchanan, Will},
  journal={Proceedings of the ACM Conference on Fairness, Accountability, and Transparency (FAccT)},
  year={2022}
}

@inproceedings{PowerAPI,
  title={A Preliminary Study of the Impact of Software Engineering on {GreenIT}},
  author={Bourdon, Aur{\'e}lien and Noureddine, Adel and Rouvoy, Romain and Seinturier, Lionel},
  booktitle={First International Workshop on Green and Sustainable Software (GREENS)},
  year={2012}
}

@article{KairouzAdvancesFL,
  title={Advances and Open Problems in Federated Learning},
  author={Kairouz, Peter and McMahan, H Brendan and others},
  journal={Foundations and Trends in Machine Learning},
  volume={14},
  number={1--2},
  pages={1--210},
  year={2021}
}

@misc{IntelHertzbleedMitigation,
  title={Frequency Throttling Side Channel Guidance},
  author={{Intel Corporation}},
  year={2022},
  howpublished={Intel Security Advisory {INTEL-SA-00698}; software guidance for mitigating frequency-based timing side channels on Sapphire Rapids and later.},
  url={https://www.intel.com/content/www/us/en/developer/articles/technical/frequency-throttling-side-channel-guidance.html}
}

@inproceedings{WangSubsampled2019,
  author={Wang, Yu-Xiang and Balle, Borja and Kasiviswanathan, Shiva Prasad},
  title={Subsampled R{\'e}nyi Differential Privacy and Analytical Moments Accountant},
  booktitle={The 22nd International Conference on Artificial Intelligence and Statistics},
  year={2019}
}

@techreport{DMTF_SPDM,
  author={{Distributed Management Task Force (DMTF)}},
  title={Security Protocol and Data Model (SPDM) Specification, Version 1.2.1 (DSP0274)},
  year={2023},
  url={https://www.dmtf.org/dsp/DSP0274}
}

@article{Paulin2015,
  author={Paulin, Daniel},
  title={Concentration inequalities for Markov chains by Martingale methods},
  journal={Electronic Journal of Probability},
  volume={20},
  pages={1--32},
  year={2015}
}

@techreport{IntelSA00837,
  title={Intel Trust Domain Extensions ({Intel TDX}) Module Advisory},
  author={{Intel Corporation}},
  year={2024},
  number={INTEL-SA-00837},
  institution={Intel Security Center},
  url={https://www.intel.com/content/www/us/en/security-center/advisory/intel-sa-00837.html}
}

@techreport{NVIDIARIM,
  title={{NVIDIA GPU} Remote Attestation and {Reference Integrity Manifest (RIM)} Architecture},
  author={{NVIDIA}},
  year={2024},
  institution={NVIDIA Corporation},
  url={https://docs.nvidia.com/nvtrust/reference-integrity-manifest/}
}

@article{ChenGridImpacts2025, title={Power for AI Data Centers: Energy Demand, Grid Impacts, Challenges and Perspectives}, volume={19}, ISSN={1996-1073}, url={http://dx.doi.org/10.3390/en19030722}, DOI={10.3390/en19030722}, number={3}, journal={Energies}, publisher={MDPI AG}, author={Sheng, Yu and Zhang, Chenxuan and Zhu, Zixuan and Xu, Hongyi and Wen, Junqi and Wang, Ruoheng and Yang, Jianjun and Wang, Qin and Bu, Siqi}, year={2026}, month=Jan, pages={722} }

@inproceedings{SteinkeAuditing2023, series={NeurIPS 2023}, title={Privacy Auditing with One (1) Training Run}, url={http://dx.doi.org/10.52202/075280-2143}, DOI={10.52202/075280-2143}, booktitle={Advances in Neural Information Processing Systems 36}, publisher={Neural Information Processing Systems Foundation, Inc. (NeurIPS)}, author={Steinke, Thomas and Nasr, Milad and Jagielski, Matthew}, year={2023}, pages={49268–49280}, collection={NeurIPS 2023} }

@inproceedings{KellerMASCOT2016,
  title={Mascot: bounding vulnerabilities and errors in secure computation},
  author={Keller, Marcel and Orsini, Emmanuela and Scholl, Peter},
  booktitle={Proceedings of the 2016 ACM SIGSAC Conference on Computer and Communications Security},
  pages={830--841},
  year={2016}
}

\end{document}